\newtheorem{myremark}{\bf  Remark}
\newtheorem{mydefinition}{\bf Definition}
\newtheorem{mytheorem}{\bf Theorem}
\newtheorem{mylemma}{\bf Lemma}
\newcommand{\rmnum}[1]{\romannumeral #1}
\newcommand{\Rmnum}[1]{\expandafter\@slowromancap\romannumeral #1@}
\newenvironment{proof}[1][Proof]{\noindent\textbf{#1.} }{\ \rule{0.5em}{0.5em}}
\begin{document}
\baselineskip 0.7 cm
\title{\normalsize\bf Advances in Difference Equations, 2018；\\
DOI: 10.1186/s13662-019-2074-8\\
Modeling, discretization, and hyperchaos detection of conformable derivative approach to a financial system with market confidence and ethics risk}
\author{\qquad Baogui Xin\thanks{B. Xin, Nonlinear Science Center, College of Economics and Management, Shandong University of Science and Technology, Qingdao 266590, China, e-mail: xin@tju.edu.cn, corresponding author} \qquad\ Wei Peng\thanks{%
W. Peng, Nonlinear Science Center, College of Economics and Management, Shandong University of Science and Technology, Qingdao 266590, China, e-mail: pengweisd@foxmail.com} \qquad\ Yekyung Kwon\thanks{%
Y. Kwon, Division of Global Business Administration, Dongseo University, Busan 47011, Korea, e-mail: yiqing@hanmail.net} \qquad\  Yanqin Liu%
\thanks{%
Y. Liu, College of Mathematical Sciences, Dezhou University, Dezhou 253023, China, e-mail: yqliumath@163.com} }
\date{ }
\maketitle

\begin{abstract}
\baselineskip 0.7 cm A new chaotic financial system is proposed by considering ethics involvement in a four-dimensional financial system with market confidence. A five-dimensional conformable derivative financial system is presented by introducing conformable fractional calculus to the integer-order system. A discretization scheme is proposed to calculate numerical solutions of conformable derivative systems. The scheme is illustrated by testing hyperchaos for the system.

\noindent  {\bf Keywords:} Conformable calculus; Fractional-order financial system; Discretization process; Hyperchaotic attractor; Market confidence; Ethics risk.

\end{abstract}

\section{Introduction}
A hyperchaotic system is typically defined as a chaotic system with at least two positive Lyapunov exponents\cite{LiuWei2018,WeiMoroz2018,WeiRajagopal2018}. Considering one null exponent along the flow and one negative exponent to ensure the boundedness of the solution, the dimension for a hyperchaotic fractional-order system is at least four. Many types of fractional-order hyperchaotic systems have been investigated, such as hyperchaos in fractional-order R$\ddot{o}$ssler systems \cite{lichen01}, fractional hyperchaotic systems with flux controlled memristors \cite{Rajagopal01}, fractional-order hyperchaotic systems realized in circuits \cite{El-Sayed01,El-Sayed04,Mouj01}, fractional-order Lorenz hyperchaotic systems \cite{WangY01}, and fractional-order hyperchaotic cellular neural networks \cite{HuangX01}. Huang $\&$ Li \cite{Huang} proposed an interesting nonlinear financial system depicting the relationship among interest rates, investments, prices, and savings. Chen \cite{Chenw1} presented a fractional form of nonlinear financial system. Wang, Huang and Shen \cite{Wang3} established an uncertain fractional-order form of the financial system. Mircea et al. \cite{Mircea1} set up a delayed form of the financial system. Xin, Chen, and Ma \cite{Xin1} proposed a discrete form of financial system.  Yu, Cai, and Li \cite{yuu} extended the financial system with the average profit margin. Xin, Li, and Zhang \cite{Xin07,XinB01} introduced investment incentive and market confidence to the nonlinear financial system to set up novel four-dimensional financial systems. Most of these are fractional-order hyperchaotic systems \cite{ZhangL01,WangS01}. In this paper, we will construct a 5D fractional-order hyperchaotic financial system.

Although Riemann-Liouville, Caputo, and Grunwald-Letnikov fractional derivatives \cite{JiangX01,JiangX02,QiH01,QiH02,ChenD01,ChenD02,WuG01,WuG02} are widely used in physics, mathematics, medicine, economics, and engineering as shown above, these derivative definitions lack some of the agreed properties for  classical differential operator, such as the chain rule. The conformable derivative can be regarded as a natural extension of the classical differential operator, which satisfies most important properties, such as the chain rule \cite{KhalilR01,Abdeljawad01,AbdeljawadA02}. Researchers have recently applied conformable derivatives to many scientific fields \cite{AcanF01,AttiaL01,BohnerH01,Tarasov01,RosalesG01,AkbulutM01,MartR01,RezazadehK01,Korkmaz01,PerezG01,HeBY01}. In this paper, we will introduce the conformable derivative to the financial system with market confidence and ethics risk.

Researchers \cite{LuYang01,ErfaniV01,DincerY01,LiLiu01,CavdarA01,ZhaoZhao01} have paid much attention to the impact of confidence on the financial system since the financial crisis began in 2008. Financial crises are sometimes associated with a vicious circle in which confidences and economic indicators interact. For example, lower consumer confidence can lead to weaker consumption expectation, which will bring about a sharp decrease in market demand, stagnation in orders, and a marked decrease in sales. This can hold down investor confidence and trigger a decrease in employment and wages, which also lead to lower consumer confidence. Alternatively, we can boost the demands of consumers and investors by improving individual incomes, confidences, and expectations. Certainly, investor confidence can be rescued by lowering interest rates. Consumers' balance sheets can result in a price index, which will affect the confidences of both investors and consumers \cite{XinB01}. Thus, it is necessary to study financial systems with market confidence.

As for mainstream finance theory, financial participants' profit drive and ethics-denunciation can rub together, so the profit motive leads them to crimp on ethics and disregard the broader social impact of their actions. Unrestrained demand for profit turns financial participants into psychopaths. Although the profit-driven segment can earn abnormal returns through the violation of ethics in the short run, these profit-generating opportunities do not persist in the long run for the socially responsible investment movement \cite{Derwall01}, because it violates the inherent requirement of financial efficiency, entailing a high risk that the financial market falls into instability. Analogous to Rasmussen's logic \cite{Rasmussen01}, the drive for profit is indeed a double-edged sword. As has long been recognized, it is an inevitable result of flourishing financial markets as a positively useful means of encouraging investment. Conversely, it has some of the other effects of extreme profit-driven behaviors, which leads people to sympathize more fully and readily with selfish profit maximization, as opposed to altruistic social responsibility, and this distortion in our sympathies, in turn, undermines both investment ethics and public welfare. Ethics may be an easy for many subject to understand, but to implement it in financial markets requires faith, dedication, determination, and regulation. Profit-driven behavior in a market economy is not naturally reasonable, but it must be restrained. Therefore, it is meaningful for us to consider ethics when we analyze the financial system.

After we propose a conformable derivative financial system, we need a suitable scheme to obtain its solutions. Though there are several methods to solve a conformable derivative system \cite{EslamiR01,IlieB01,HosseiniB01,UnalG01,KumarS01,SrivastavaG01,KaplanA01,YavuzM01,KartalS01,IyiolaT01,PerezG01,RuanS01,HeSun01,Yokus01,RezazadehZ01,ZhongW01,TayyanS01,YaslanN01,KurtC01,KhalilA01,UnalG02,LiuW01,CenesizK01}, but these are too complex for many people. Inspired by the discretization process for Caputo derivative \cite{El-Sayed02,Agarwal01}, we propose a simple discretization process for a conformable derivative. As shown in Section 3, our results agree with Mohammadnezhad's conformable Euler's method \cite{Mohammadnezhad01}. Using our proposed discretization scheme, we will detect the hyperchaotic attractor of a five-dimensional fractional-order financial system.

The remainder of this paper is organized as follows. Section 2 presents a conformable derivative hyperchaotic financial system with market confidence and ethics risk. Section 3 provides a conceptual overview of conformable calculus and propose a conformable discretezation process, which coincides with Mohammadnezhad's conformable Euler's method \cite{Mohammadnezhad01}. In section 4, we detect the hyperchaotic attractor from the proposed financial system. Some concluding remarks in Section 5 close the paper.

\section{A conformable derivative hyperchaotic financial system with market confidence and ethics risk}
\begin{figure}
\centerline { \epsfig{figure=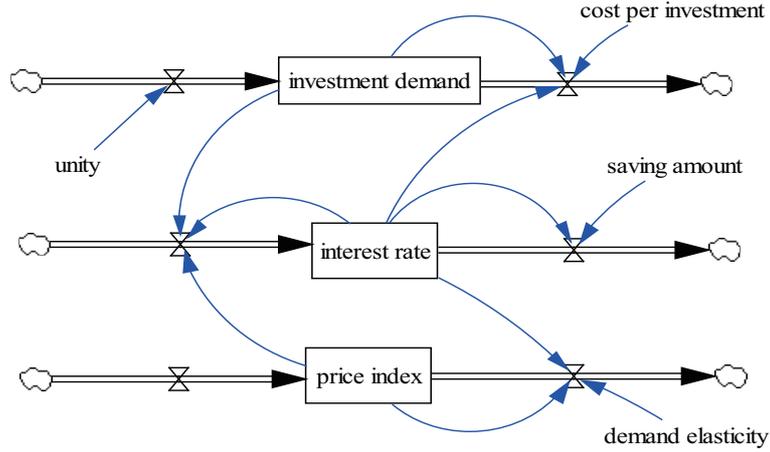, height=60mm,width=100mm}}
\caption{{\small Schematic diagram of a simple financial system (\ref{eq01}). }} \label{fg001}
\end{figure}
Based on the dynamical mechanism of financial systems, as shown in Figure \ref{fg001},  Huang $\&$ Li \cite{Huang} proposed an interesting nonlinear financial system, as follows:
\begin{equation} \label{eq01}
\left\{ \begin{aligned}
\dot{x}&= z+(y-a)x,\\
\dot{y}&= 1-by-x^{2},\\
\dot{z}&= -x-cz,\\
\end{aligned} \right.
\end{equation}
where $x, y, z, a, b, c$ are the interest rate, investment
demand, price index, saving amount, cost per investment, and demand elasticity of commercial markets, respectively, and $a,\ b,\ c\geq 0$.

\begin{figure}
\centerline { \epsfig{figure=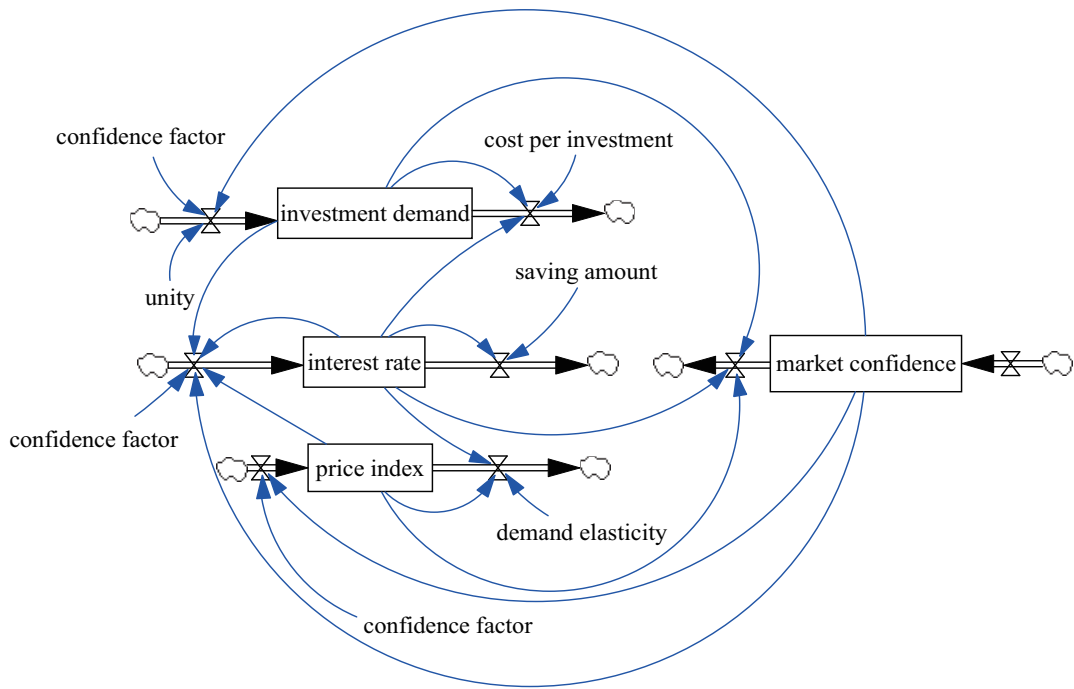, height=90mm,width=140mm}}
\caption{{\small Schematic diagram of financial system (\ref{eq02}) with market confidence. }} \label{fg002}
\end{figure}

Xin and Zhang \cite{XinB01} took into account the market confidence in system (\ref{eq01}), updated the dynamical mechanism of financial systems with the market confidence, as shown in Figure \ref{fg002}, and obtained the following financial system:
\begin{equation} \label{eq02}
\left\{ \begin{aligned}
\dot{x}&= z+(y-a)x+m_{1}w,\\
\dot{y}&= 1-by-x^{2}+m_{2}w,\\
\dot{z}&= -x-cz+m_{3}w,\\
\dot{w}&=-xyz,\\
\end{aligned} \right.
\end{equation}
where $x, y, z, a, b, c $ are defined as in system (\ref{eq01}), $ w $ indicates the market confidence, and $m_{1},\ m_{2},\ m_{3} $ are the impact factors.

Ethics risk is often portrayed as a threat under asymmetric information. Uncertain or incomplete contracts cause the responsible stakeholders to not bear all of the consequences for maximizing their own utility while harming other stakeholders through improper behavior such as lying, cheating, and breaking the terms of a contract. However, there are no clear guidelines for governments to deal with ethical dilemmas, whether in law or often in religion. Thus, ethical risk may occur when stakeholders must choose among alternatives, for example, when significant value conflicts exit among differing interests, real alternatives with justifiable equality and mutual benefit, significant consequences on them. Thus we can update the dynamical mechanism of financial system (\ref{eq02}) by accounting for ethical risk, as shown in Figure \ref{fg003}.

\begin{figure}
\centerline { \epsfig{figure=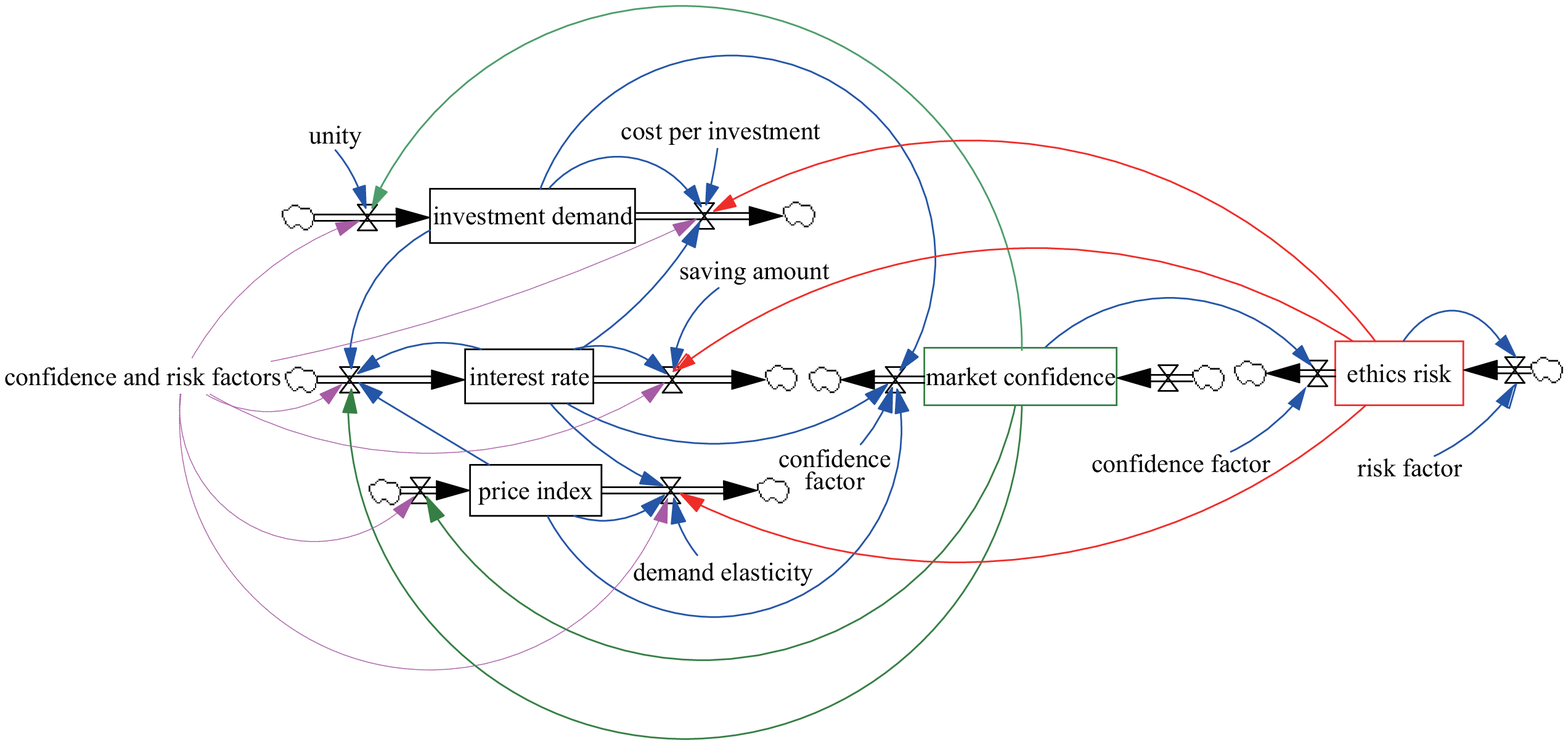, height=90mm,width=160mm}}
\caption{{\small A schematic diagram of the financial system (\ref{eq03}) with market confidence and ethics risk. }} \label{fg003}
\end{figure}

Ethics risk can negatively affects market confidence. They will to some extent offset the impact of market confidence on the interest rate, investment demand, and price index. In addition, the higher the market confidence, the less the motivation to harm other stakeholders, and the lower the ethics risk. What is more, ethical risks becoming increasingly lower with the continuous improvement of regulations. Market confidence is influenced by many factors, including ethics, law, and religion, so we will consider the impact of ethics risk on market confidence. Thus we can obtain the following financial system accounting for both market confidence and ethics risk:
\begin{equation} \label{eq03}
\left\{ \begin{aligned}
\dot{x}&= z+(y-a)x+k(w-pu),\\
\dot{y}&= 1-by-x^{2}+k(w-pu),\\
\dot{z}&= -x-cz+k(w-pu),\\
\dot{w}&=-dxyz,\\
\dot{u}&=k(w-pu),\\
\end{aligned} \right.
\end{equation}
where $x, y, z, w, a, b, c $ means the same as in the system (\ref{eq02}), and $ u $ denotes the ethics risk, and $k,\ p,\ d $ are the impact factors.

Fractional-order economic systems \cite{Xin2,Yavuz01,Baskonus01,MaRen01,HuangW01,XinB01} can generalize its integer-order forms \cite{Xin07,Xin4}. As a natural extension of the integer-order differential operator, the conformable fractional operator is a suitable tool to generalize integer-order forms, so we will introduce conformable fractional derivatives to financial system (\ref{eq03}), as follows:

\begin{equation} \label{eq04}
\left\{ \begin{aligned}
T_{\alpha_{1}}x&= z+(y-a)x+k(w-pu),\\
T_{\alpha_{2}}y&= 1-by-x^{2}+k(w-pu),\\
T_{\alpha_{3}}z&= -x-cz+k(w-pu),\\
T_{\alpha_{4}}w&= -dxyz,\\
T_{\alpha_{5}}u&= k(w-pu),\\
\end{aligned} \right.
\end{equation}
where $\alpha=(\alpha_{1},\alpha_{2},\alpha_{3},\alpha_{4},\alpha_{5})$ is subject to $
\alpha_{1},\alpha_{2},\alpha_{3},\alpha_{4},\alpha_{5}\in(0,1)$.

\begin{myremark}
When $\alpha=(1,1,1,1,1)$, system (\ref{eq04}) degenerates to system (\ref{eq03}).
\end{myremark}

\section{Discretization of conformable derivative systems }
\subsection{Preliminary}

\begin {mydefinition}(See \cite{Abdeljawad01})\label{r01}
 For a function $f:[t_{0}, \infty)\rightarrow \mathbb{R}$, its left conformable derivative starting from $t_{0}$ of order $\alpha\in(0,1)$ is defined by
\begin{equation}\label{fe01}
T_{\alpha}^{t_{0}}f(t)=\lim_{\varepsilon\rightarrow 0}\frac{f(t+\varepsilon (t-t_{0})^{1-\alpha})-f(t)}{\varepsilon},
\end{equation}
\end {mydefinition}
where the function $f$  is called $\alpha$-differentiable.

\begin {mydefinition}(See \cite{Abdeljawad01})\label{r02}
For a function $f:[t_{0}, \infty)\rightarrow \mathbb{R}$, its left conformable integral starting from $t_{0}$ of order $\alpha\in(0,1)$ is defined as
\begin{equation}\label{fe02}
I_{\alpha}^{t_0}f(t)=\int^{t}_{t_0}(s-t_0)^{\alpha-1}f(s)ds,
\end{equation}
where the integral is the usual Riemann improper integral.
\end {mydefinition}

\begin {mylemma}(See \cite{Abdeljawad01}) \label{r03}
Suppose the derivative order $\alpha\in(0,1)$, and functions $f$ are $\alpha$-differentiable at a point $t_0>0$. Then the left conformable derivative satisfies
\begin{equation}\label{fe03}
T_{\alpha}^{t_{0}}f(t)=(t-t_0)^{1-\alpha}\frac{df(t)}{dt}.
\end{equation}
\end {mylemma}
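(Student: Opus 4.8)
The plan is to connect the conformable derivative, defined through the limit in~(\ref{fe01}), to the ordinary derivative $df/dt$ by exploiting the special structure of the increment $\varepsilon(t-t_0)^{1-\alpha}$ appearing in the numerator. The key observation is that the conformable difference quotient is almost an ordinary difference quotient, except that the step size is scaled by the factor $(t-t_0)^{1-\alpha}$. First I would introduce the substitution $h=\varepsilon(t-t_0)^{1-\alpha}$, so that as $\varepsilon\rightarrow 0$ we also have $h\rightarrow 0$ (this uses $t_0>0$ and $\alpha\in(0,1)$, which guarantee $(t-t_0)^{1-\alpha}$ is a finite nonzero quantity at the point of interest). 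Solving for $\varepsilon$ gives $\varepsilon=h(t-t_0)^{\alpha-1}$.

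The second step is the algebraic rewriting of the quotient. I would substitute into~(\ref{fe01}) to obtain
\begin{equation}\label{proofstep}
T_{\alpha}^{t_{0}}f(t)=\lim_{h\rightarrow 0}\frac{f(t+h)-f(t)}{h(t-t_0)^{\alpha-1}}=(t-t_0)^{1-\alpha}\lim_{h\rightarrow 0}\frac{f(t+h)-f(t)}{h}.
\end{equation}
The inner limit is by definition the ordinary derivative $df(t)/dt$, so the right-hand side collapses to $(t-t_0)^{1-\alpha}\,df(t)/dt$, which is exactly~(\ref{fe03}). The only remaining justification is that the limit in $\varepsilon$ and the limit in $h$ coincide, which follows because the map $\varepsilon\mapsto h$ is a continuous bijection near $0$ (it is linear in $\varepsilon$ with nonzero slope $(t-t_0)^{1-\alpha}$), so the two limits exist or fail to exist together and agree when they exist.

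The main obstacle, such as it is, is a matter of hypotheses rather than of deep technique: the identity presupposes that $f$ is ordinarily differentiable at $t$, since the inner limit in~(\ref{proofstep}) must exist as a genuine derivative. The statement already assumes $f$ is $\alpha$-differentiable at $t_0>0$, and I would note that the change of variables shows these two differentiability notions are equivalent at interior points where $t>t_0$, so no extra assumption is smuggled in. I would also flag the boundary behavior: the factor $(t-t_0)^{1-\alpha}$ is well-defined and positive for $t>t_0$, and the whole argument is valid precisely on $(t_0,\infty)$, which matches the domain in Definition~\ref{r01}. The argument is therefore short, and the bulk of the writing is simply making the substitution and the equivalence of the two limiting processes explicit.
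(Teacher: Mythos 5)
The paper itself offers no proof of this lemma --- it is stated as a quotation from \cite{Abdeljawad01} --- so the comparison is against the standard argument in that reference, and your proposal reproduces it exactly: the linear substitution $h=\varepsilon(t-t_{0})^{1-\alpha}$ turns the conformable difference quotient of Definition \ref{r01} into an ordinary one, the constant prefactor $(t-t_{0})^{1-\alpha}$ factors out of the limit, and the inner limit is $df(t)/dt$. Your write-up is correct, and it rightly isolates the two points needing care: the change of variables is a bijection with nonzero slope only when $t>t_{0}$, so the identity (and the equivalence of $\alpha$-differentiability with ordinary differentiability) holds on $(t_{0},\infty)$ rather than at the starting point $t_{0}$ itself, which is the domain restriction implicit in the statement.
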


\begin {mylemma}(Conformable Euler's method)\cite{Mohammadnezhad01} \label{r05}
Consider the following conformable derivative system:
\begin{equation}\label{fe04}
T_{\alpha}x(t)=f(x(t)),\quad 0\leq t\leq T,\quad x(0)=x_0,
\end{equation}
where $h=\frac{T}{N}=t_{n+1}-t_{n}$, $t_n = nh,\ n=0,1,\cdots,N$.
If $h$ is small enough and $T_{\alpha}x(t),\ T_{2\alpha}x(t)\in C^{0}[a,b]$, then we can obtain the following discretization of Eq.(\ref{fe04})
\begin{equation}\label{fe05}
x_{n+1}\approx x_n + \frac{h^{\alpha}}{\alpha}f(x_n).
\end{equation}
\end {mylemma}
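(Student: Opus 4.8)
The plan is to reduce the scheme to the fundamental theorem of conformable calculus on a single sub-interval and then quantify the error committed by freezing the right-hand side. Throughout I work on one step $[t_n,t_{n+1}]$ and treat the conformable derivative as restarted at the left node $t_n$, so that Lemma \ref{r03} applies with base point $t_0=t_n$; this local restarting is precisely what produces the step weight $h^{\alpha}$ rather than a weight built from the global origin.

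First I would invoke Lemma \ref{r03} to write $\frac{dx}{ds}=(s-t_n)^{\alpha-1}\,T_\alpha x(s)$ on $[t_n,t_{n+1}]$ and integrate in the ordinary Riemann sense, which yields the exact identity
\begin{equation*}
x(t_{n+1})=x(t_n)+\int_{t_n}^{t_{n+1}}(s-t_n)^{\alpha-1}\,T_\alpha x(s)\,ds .
\end{equation*}
This is exactly the conformable integral $I_\alpha^{t_n}$ of $T_\alpha x$ from Definition \ref{r02}, and the singularity of the weight $(s-t_n)^{\alpha-1}$ at $s=t_n$ is harmless because $\alpha>0$ makes the improper integral convergent. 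Substituting the equation $T_\alpha x(s)=f(x(s))$ turns this into a Volterra-type relation over the step.

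Next I would freeze the integrand at the left node, replacing $f(x(s))$ by $f(x(t_n))=f(x_n)$. The remaining weight integrates elementarily,
\begin{equation*}
\int_{t_n}^{t_{n+1}}(s-t_n)^{\alpha-1}\,ds=\frac{(t_{n+1}-t_n)^{\alpha}}{\alpha}=\frac{h^{\alpha}}{\alpha},
\end{equation*}
so the frozen relation collapses to $x_{n+1}\approx x_n+\frac{h^{\alpha}}{\alpha}f(x_n)$, which is exactly \eqref{fe05}. The same two terms arise as the first-order conformable Taylor expansion of $x$ about $t_n$, giving an independent check on the coefficient $h^{\alpha}/\alpha$ and on the $(n+1)$th-order remainder structure.

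It remains to justify that freezing is legitimate for small $h$, and this is where the hypothesis $T_\alpha x,\,T_{2\alpha}x\in C^{0}$ enters and is the main obstacle. I would estimate the residual $R_n=\int_{t_n}^{t_{n+1}}(s-t_n)^{\alpha-1}\big[T_\alpha x(s)-T_\alpha x(t_n)\big]\,ds$ by applying the fundamental theorem a second time, now to the function $T_\alpha x$, to obtain $T_\alpha x(s)-T_\alpha x(t_n)=\int_{t_n}^{s}(\tau-t_n)^{\alpha-1}\,T_{2\alpha}x(\tau)\,d\tau$. Bounding $|T_{2\alpha}x|\le M$ by continuity on the compact interval and inserting this into $R_n$ gives a nested weight whose integral equals $\tfrac{M}{2\alpha^{2}}h^{2\alpha}$, so the local truncation error is $O(h^{2\alpha})$ and is dominated by the retained term $\tfrac{h^{\alpha}}{\alpha}f(x_n)$ once $h$ is small. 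The delicate points are that the second use of the conformable fundamental theorem requires $T_\alpha x$ to itself be $\alpha$-differentiable with continuous $T_{2\alpha}x$ (exactly the stated regularity), and that both nested integrals must be treated as convergent improper integrals because of the endpoint singularities of the conformable weight.
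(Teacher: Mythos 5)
Your proof is correct, but it is not the route the paper takes. The paper never proves this lemma at all (it is imported from \cite{Mohammadnezhad01}); its own derivation of the right-hand side $x_n+\frac{h^{\alpha}}{\alpha}f(x_n)$ is the proof of Theorem \ref{r06}, where the equation is first \emph{modified} by a completely piecewise constant argument, $T_{\alpha}x(t)=f(x([t/h]h))$, and the modified system is then solved exactly on each interval $[nh,(n+1)h)$ using Lemma \ref{r03} and the integral of the weight $(s-nh)^{\alpha-1}$; in that route equality (not approximation) holds, no error term ever appears, and the hypothesis $T_{\alpha}x,\,T_{2\alpha}x\in C^{0}$ is never used. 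You instead keep the original equation $T_{\alpha}x(t)=f(x(t))$, write the exact conformable fundamental-theorem identity over one step, freeze the integrand at the left node, and quantify the freezing error by a second application of Lemma \ref{r03} to $T_{\alpha}x$, arriving at $|R_n|\le\frac{M}{2\alpha^{2}}h^{2\alpha}$ with $M=\max|T_{2\alpha}x|$. As a proof of the lemma \emph{as stated} this is genuinely different and in fact more faithful: it is what justifies the symbol ``$\approx$'' in (\ref{fe05}), it shows precisely where the regularity hypothesis enters, and it recovers the Taylor-remainder structure $\frac{h^{2\alpha}}{2\alpha^{2}}T_{2\alpha}x(\xi)$ underlying the cited conformable Euler method, whereas the paper's piecewise-constant route proves an exact statement about a different system and delegates the approximation claim to the citation. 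Both arguments rest on the same two ingredients — Lemma \ref{r03} with the base point restarted at $t_n$, and the elementary integral $\int_{t_n}^{t_{n+1}}(s-t_n)^{\alpha-1}\,ds=\frac{h^{\alpha}}{\alpha}$ — and your opening remark that this restarting convention is exactly what produces the weight $\frac{h^{\alpha}}{\alpha}$ (rather than $\frac{t_{n+1}^{\alpha}-t_n^{\alpha}}{\alpha}$, which a derivative anchored at the global origin would give) makes explicit a subtlety that the paper's proof of Theorem \ref{r06} uses silently.
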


\subsection{Discretization process}
By introducing partially piecewise constant arguments to Eq. (\ref{fe04}), Kartal and Gurcan\cite{KartalS01} obtain the form
\begin{equation}\label{fe010}
T_{\alpha}x(t)=f(x(t),x(\left[\frac{t}{h}\right]h)),\quad 0\leq t\leq T,\quad x(0)=x_0,
\end{equation}
where $h=\frac{T}{N}$, i.e., $t\in[nh,(n+1)h),\ n=0,1,2,\cdots \frac{T}{h}$.
Then Kartal and Gurcan \cite{KartalS01} proposed the discretization process of system (\ref{fe010}), but it was unsuitable for the following form obtained by introducing completetely piecewise constant arguments to Eq. (\ref{fe04}).
\begin{equation}\label{fe10}
T_{\alpha}x(t)=f(x(\left[\frac{t}{h}\right]h)),\quad 0\leq t\leq T,\quad x(0)=x_0.
\end{equation}

\begin{mytheorem}\label{r06}
(Conformable discretization by piecewise constant approximation)

According to system (\ref{fe10}), we obtain the following discretization of Eq.(\ref{fe04})
\begin{equation}\label{fe11}
x_{n+1}= x_n + \frac{h^{\alpha}}{\alpha}f(x_n),
\end{equation}
where $x_n$ denotes $x_n(nh)$.
\end{mytheorem}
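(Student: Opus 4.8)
The plan is to localize the problem to a single grid interval and exploit the fact that the completely piecewise-constant argument freezes the right-hand side. Fix $n$ and restrict attention to $t\in[nh,(n+1)h)$. On this interval $\left[\frac{t}{h}\right]=n$, so $x(\left[\frac{t}{h}\right]h)=x(nh)=x_n$ and the driving term in (\ref{fe10}) collapses to the \emph{constant} $f(x_n)$. Thus on $[nh,(n+1)h)$ the equation reads $T_\alpha x(t)=f(x_n)$, and by Lemma \ref{r03} (applied with starting point $t_0=nh$) this becomes the ordinary differential relation $(t-nh)^{1-\alpha}\frac{dx}{dt}=f(x_n)$, i.e. $\frac{dx}{dt}=(t-nh)^{\alpha-1}f(x_n)$.

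Next I would integrate this separable equation from $nh$ to $t$. Since $f(x_n)$ is constant, the integration is elementary: $x(t)-x(nh)=f(x_n)\int_{nh}^{t}(s-nh)^{\alpha-1}\,ds=f(x_n)\frac{(t-nh)^{\alpha}}{\alpha}$. Equivalently, one may recognize $\int_{nh}^{t}(s-nh)^{\alpha-1}f(x_n)\,ds$ as the left conformable integral $I_{\alpha}^{nh}$ of Definition \ref{r02} applied to the constant $f(x_n)$, which inverts the conformable derivative and yields the same closed form. Letting $t\to(n+1)h$ so that $t-nh\to h$ then gives $x_{n+1}-x_n=\frac{h^{\alpha}}{\alpha}f(x_n)$, which is exactly (\ref{fe11}).

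The feature worth emphasizing is that this is an \emph{exact} identity rather than the approximation in the conformable Euler scheme of Lemma \ref{r05}: because the argument is \emph{completely} piecewise constant, the right-hand side does not depend on $x(t)$ for $t$ in the open interval, so no quadrature error is incurred and the $\approx$ in (\ref{fe05}) sharpens to the $=$ in (\ref{fe11}). This is precisely why system (\ref{fe10}) must be treated differently from the partially piecewise-constant form introduced by Kartal and Gurcan.

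The main obstacle I anticipate is the bookkeeping at the starting point of the conformable operator. Lemma \ref{r03} is stated for $t_0>0$, whereas the construction above restarts the derivative at each node $t_0=nh$, including $n=0$. I would address this by interpreting $T_\alpha$ in (\ref{fe10}) through the local definition (Definition \ref{r01}) with base point $nh$ on each subinterval, checking that the $n=0$ case is recovered as the boundary limit $t_0\to 0^{+}$, and confirming that the one-sided continuity of $x$ at the nodes makes the telescoping of the per-interval increments consistent with the single-step map (\ref{fe11}).
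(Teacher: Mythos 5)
Your proposal is correct and follows essentially the same route as the paper: freeze the right-hand side at $f(x_n)$ on $[nh,(n+1)h)$, use Lemma \ref{r03} to convert $T_\alpha$ into $\frac{dx}{dt}=(t-nh)^{\alpha-1}f(x_n)$, integrate the power law exactly, and let $t\to(n+1)h$; the paper merely presents this as an explicit step method (cases $n=0$, $n=1$, then iteration) rather than your generic-$n$ interval, and it silently adopts the per-node base point $t_0=nh$ that you correctly flag as needing justification.
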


\begin{proof}
Using Lemma \ref{r03}, we rewrite Eq. (\ref{fe10}) as
\begin{equation*}
(t-nh)^{1-\alpha}\frac{dx(t)}{dt}=f(x(nh)),\quad 0\leq t\leq T,\quad x(0)=x_0,
\end{equation*}
which leads to
\begin{equation}\label{fe12}
\frac{dx(t)}{dt}=(t-nh)^{\alpha -1}f(x(nh)),\quad 0\leq t\leq T,\quad x(0)=x_0.
\end{equation}

Drawing on the step method \cite{El-Sayed02,KartalS01}, we detail the steps of the discretization process:

(\rmnum{1}) Let $n=0$, then $t\in[0,h)$, and we rewrite Eq. (\ref{fe12})
\begin{equation}\label{fe13}
\frac{dx(t)}{dt}=(t-0)^{\alpha -1}f(x_0),\quad t\in[0,h),
\end{equation}
and the solution of Eq. (\ref{fe13}) is
\begin{equation*}
\begin{aligned}
x_1(t)&=x_0 +\int_0^{t}\left((s-0)^{\alpha -1}f(x_0)\right)ds\\
&=x_0 +f(x_0)\int_0^t s^{\alpha-1}ds\\
&=x_0 +\frac{t_{\alpha}}{\alpha}f(x_0)).
\end{aligned}
\end{equation*}
(\rmnum{2}) Let $n=1$. Then $t\in[h,2h)$, and we rewrite Eq. (\ref{fe12})
\begin{equation}\label{fe14}
\frac{dx(t)}{dt}=(t-h)^{\alpha -1}f(x_1(h)),\quad t\in[h,2h),
\end{equation}
and the solution of Eq. (\ref{fe14}) is
\begin{equation*}
\begin{aligned}
x_2(t)&=x_1(h) +\int_h^{t}\left((s-h)^{\alpha -1}f(x_1(h))\right)ds\\
&=x_1(h) +f(x_1(h))\int_h^t (s-h)^{\alpha-1}ds\\
&=x_1(h) +\frac{(t-h)^{\alpha}}{\alpha}f(x_1(h)).
\end{aligned}
\end{equation*}
(\rmnum{3}) By repeating the above process, we obtain the following solution of Eq. (\ref{fe12}):
\begin{equation*}
x_{n+1}(t)=x_{n}(nh)+\frac{(t-nh)^{\alpha}}{\alpha}f(x_n(nh)),\quad t\in[nh,(n+1)h).
\end{equation*}
Let $t\rightarrow (n+1)h$. We deduce the the following discretization:
\begin{equation*}
x_{n+1}((n+1)h)=x_{n}(nh)+\frac{h^{\alpha}}{\alpha}f(x_n(nh)),\quad t\in[nh,(n+1)h).
\end{equation*}
That is,
\begin{equation*}
x_{n+1}=x_{n}+\frac{h^{\alpha}}{\alpha}f(x_n).
\end{equation*}
It is proved.
\end{proof}
\begin{myremark}
The conformable discretization by piecewise constant approximation well coincides with the conformable Euler's method \cite{Mohammadnezhad01}.
\end{myremark}

\section{Hyperchaos detection}
Using Theorem \ref{r06}, we rewrite system (\ref{eq04}) using piecewise constant approximation, as follows:
\begin{equation} \label{eq4444}
\left\{ \begin{aligned}
x_{n+1}&= x_n + \frac{h^{\alpha_{1}}}{\alpha_{1}} (z+(y-a)x+k(w-pu)),\\
y_{n+1}&= y_n + \frac{h^{\alpha_{2}}}{\alpha_{2}} (1-by-x^{2}+k(w-pu)),\\
z_{n+1}&= z_n + \frac{h^{\alpha_{3}}}{\alpha_{3}} (-x-cz+k(w-pu)),\\
w_{n+1}&= w_n + \frac{h^{\alpha_{4}}}{\alpha_{4}} (-dxyz),\\
u_{n+1}&= u_n + \frac{h^{\alpha_{5}}}{\alpha_{5}} k(w-pu).\\
\end{aligned} \right.
\end{equation}

In this section, we implement hyperchaos detection by varying the parameters related to ethics risk, such as $\alpha_5$, the confidence factor $k$, and the risk factor $p$. To detect hyperchaos in system (\ref{eq4444}) using conformable discretization process, we fix the following parameters and initial point values: $h=0.002,\ a=0.8,\ b=0.6,\ c=1,\ d=2,\ \alpha_1=0.3,\ \alpha_2=0.5,\ \alpha_3=0.6,\ \alpha_4=0.24,$ $x_0=0.4,\ y_0=0.6,\ z_0=0.8,\ w_0=0.3,\ u_0=0.4.$

\begin{figure}
\begin{center}
\begin{minipage}{140mm}
\subfigure[Lyapunov exponents vs. $\alpha_5$.]{
\resizebox{7cm}{!}{\includegraphics{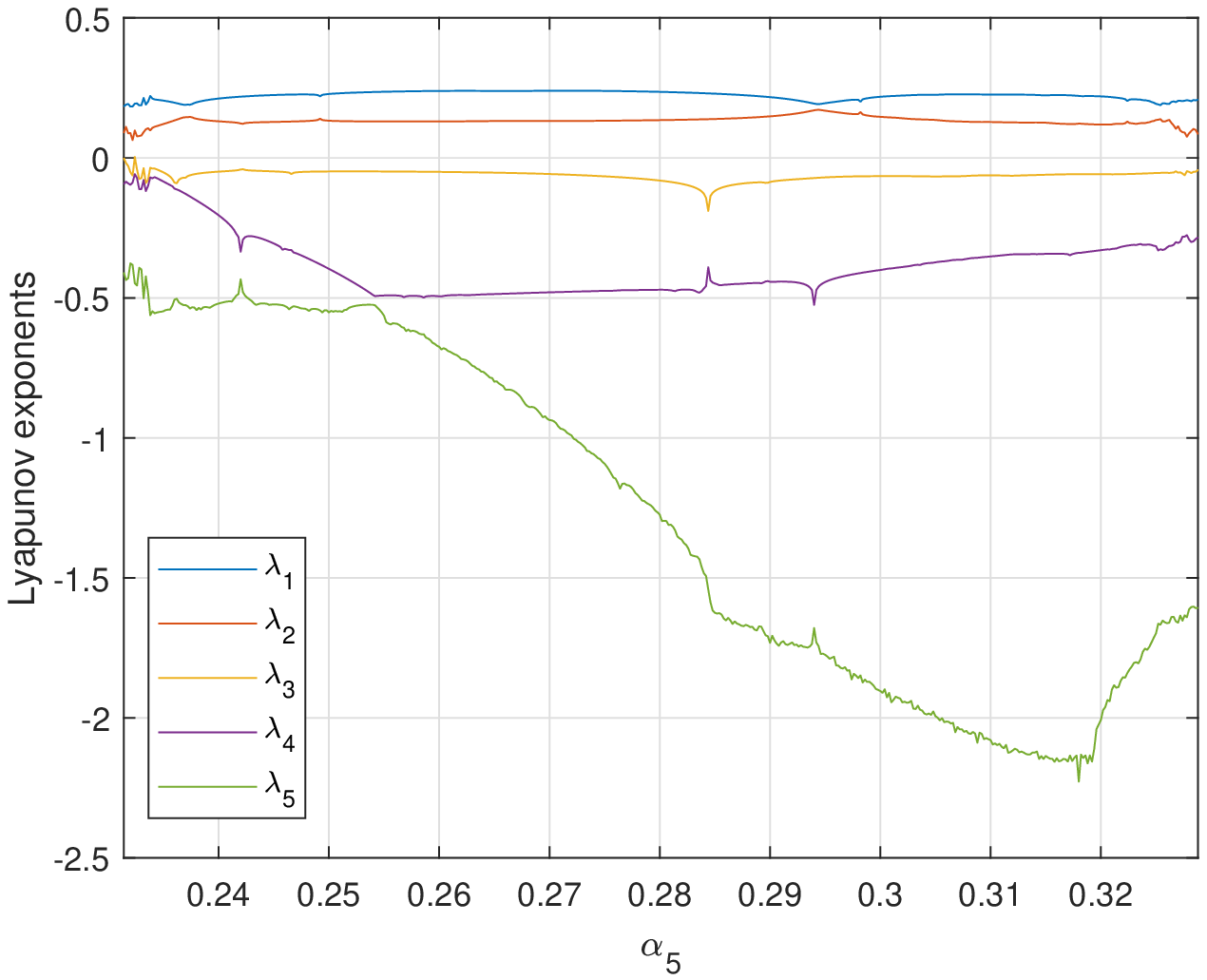}}}%
\subfigure[Bifurcation of u vs. $\alpha_5$.]{
\resizebox{7cm}{!}{\includegraphics{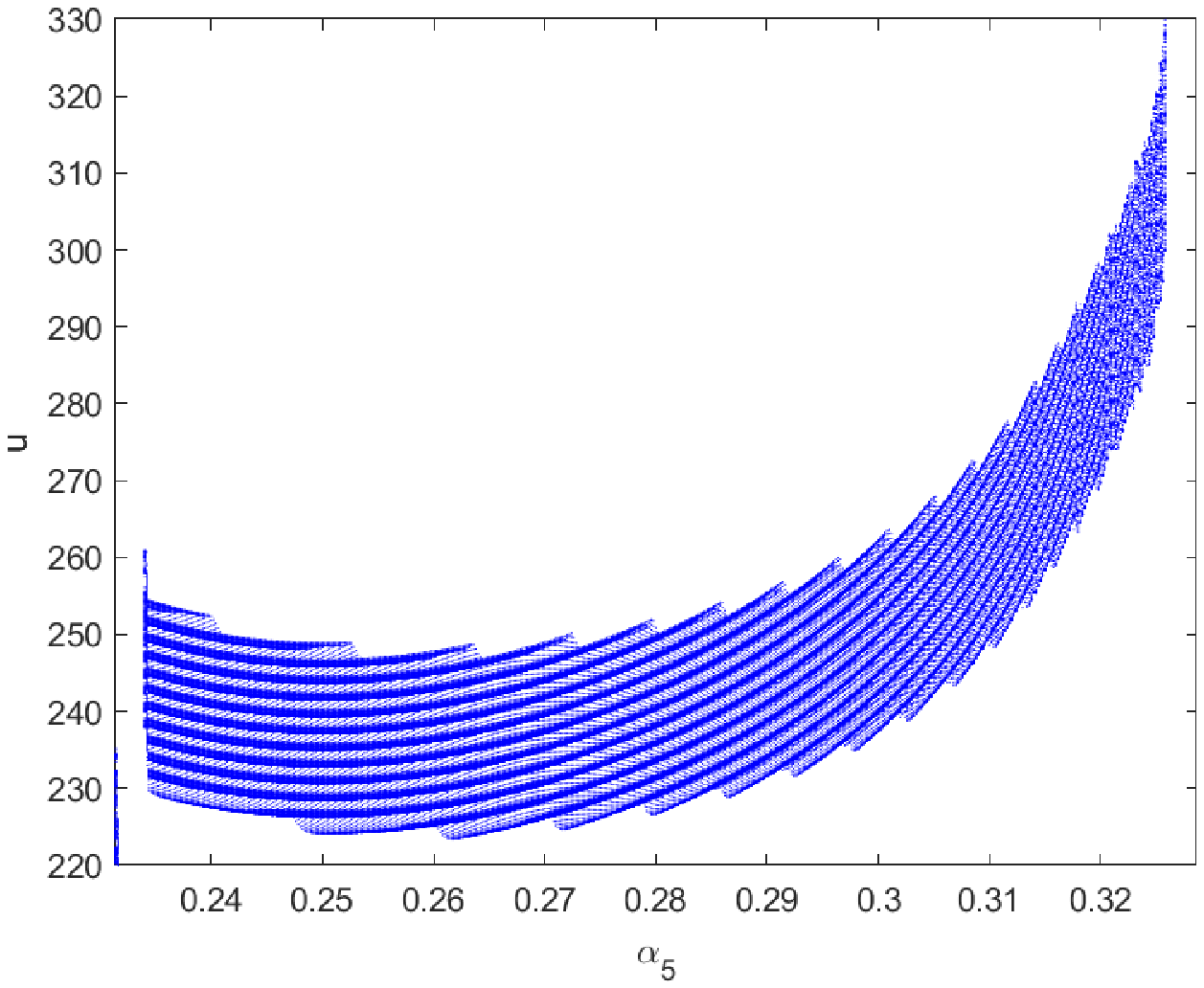}}}
\end{minipage}
\begin{minipage}{140mm}
\subfigure[Bifurcation of x vs. $\alpha_5$.]{
\resizebox{7cm}{!}{\includegraphics{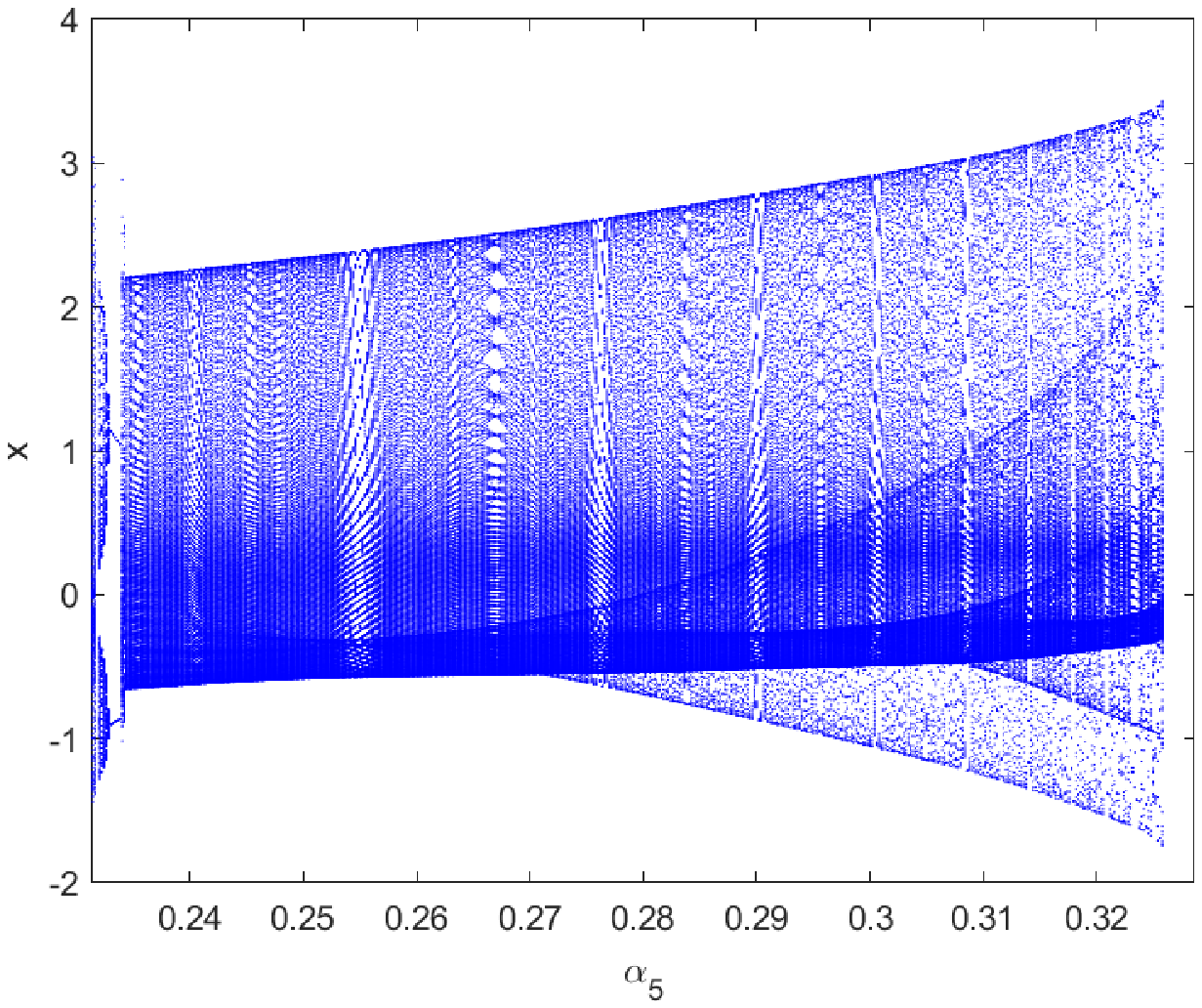}}}%
\subfigure[Bifurcation of z vs. $\alpha_5$.]{
\resizebox{7cm}{!}{\includegraphics{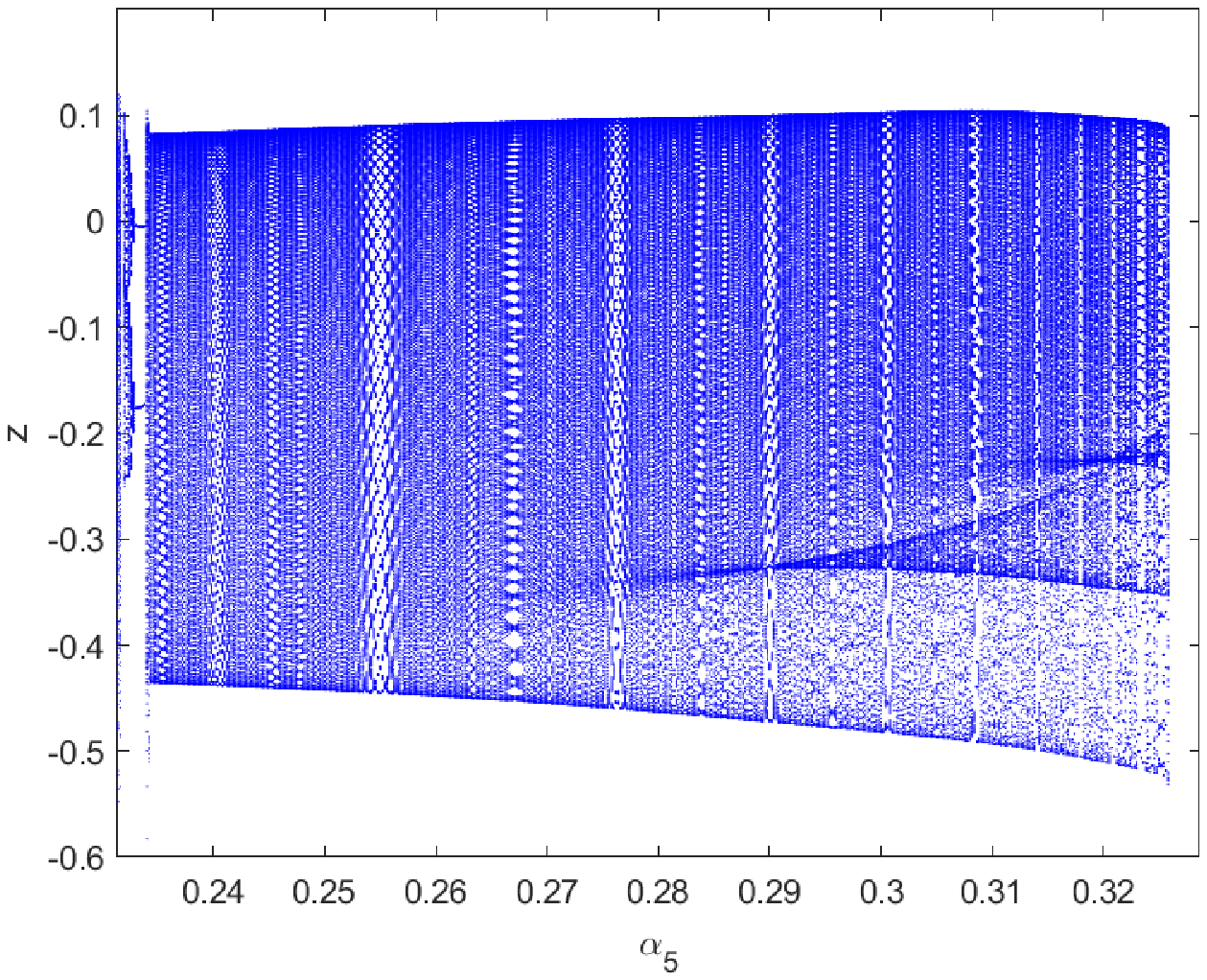}}}%
\end{minipage}
\begin{minipage}{140mm}
\subfigure[Hyperchaotic attractor in the (y, z, u)-plane vs. $\alpha_5=0.24$.]{
\resizebox{7cm}{!}{\includegraphics{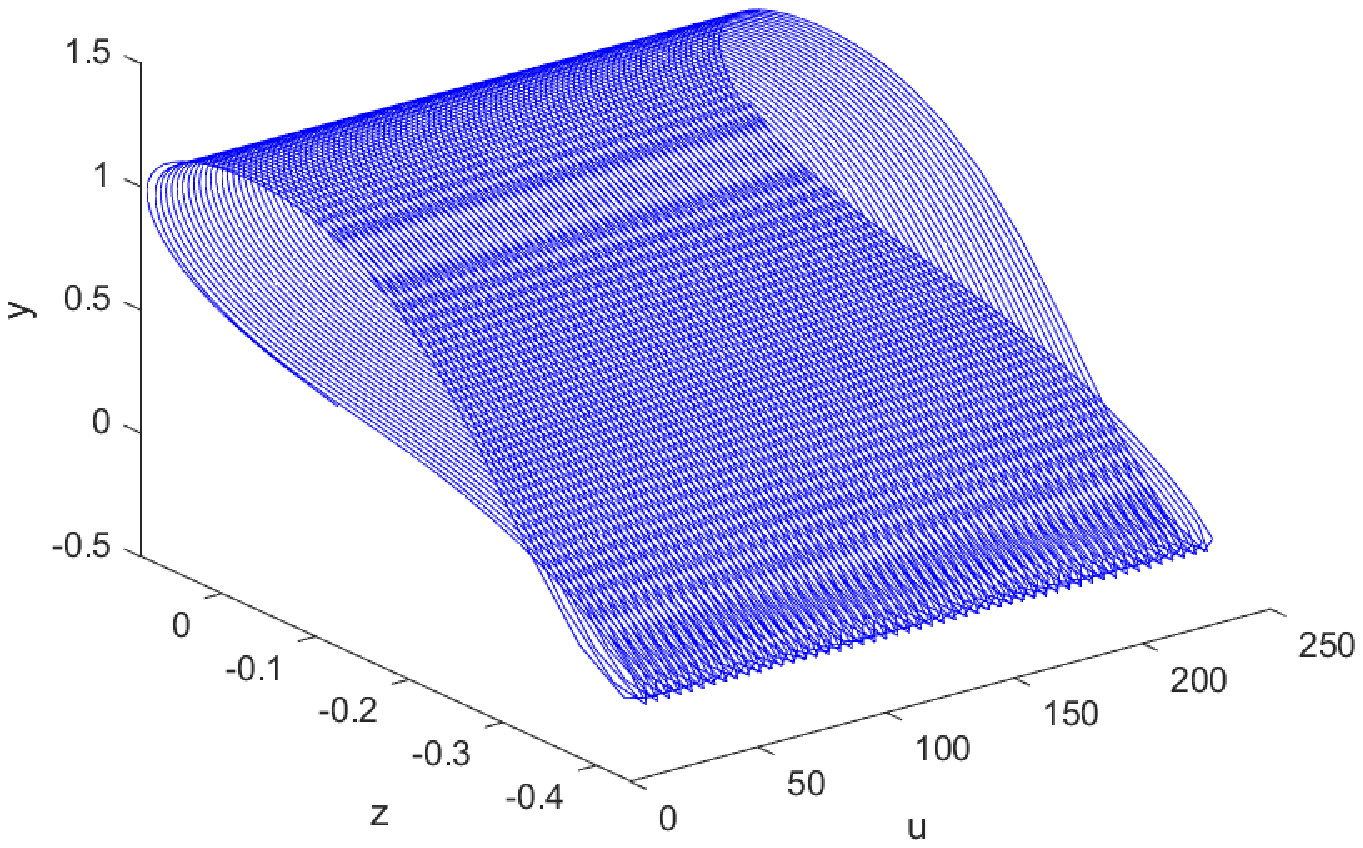}}}%
\subfigure[Hyperchaotic attractor in the (x, y, w)-plane vs. $\alpha_5=0.24$.]{
\resizebox{7cm}{!}{\includegraphics{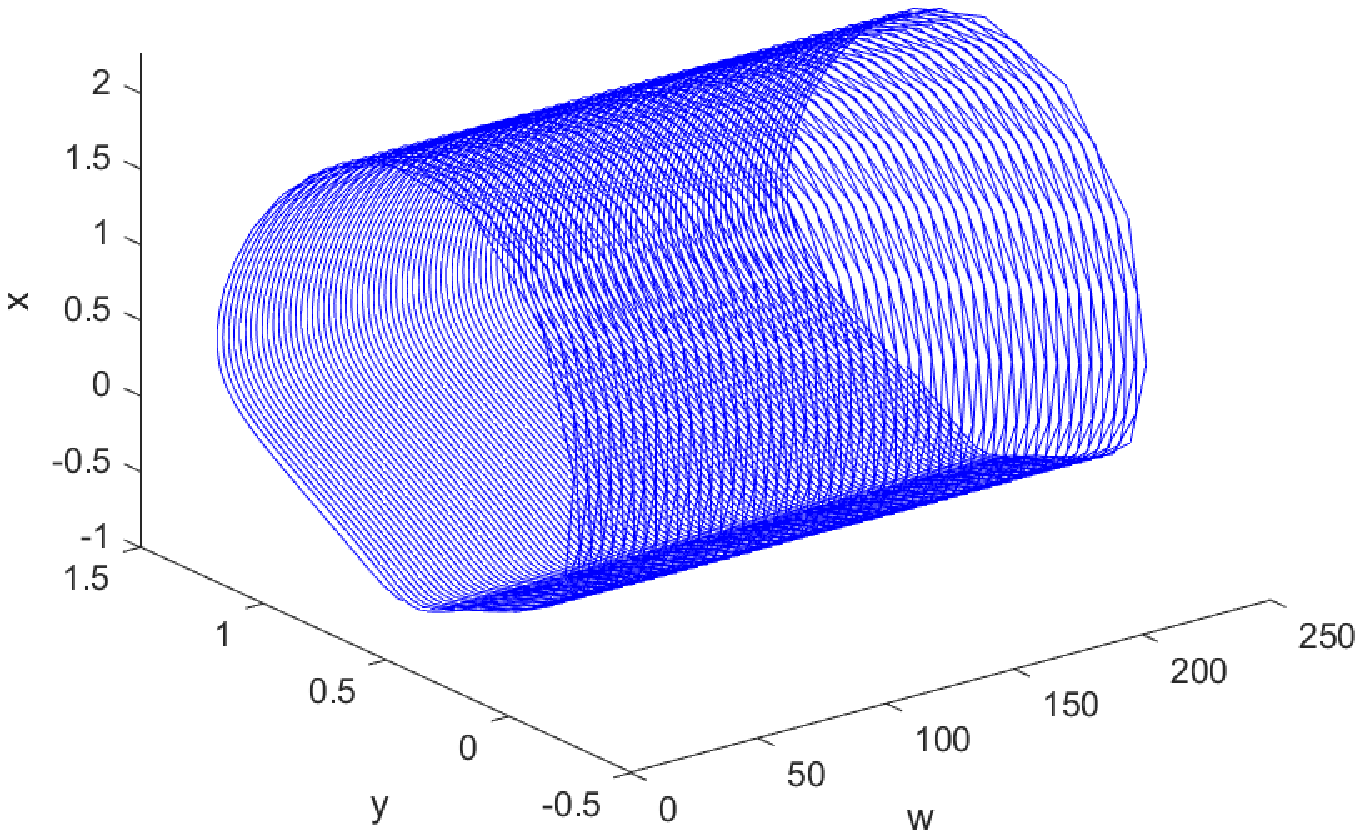}}}%
\caption{Complex behavior of system (\ref{eq4444}) with $\alpha_5$.}
\label{fg04}
\end{minipage}
\end{center}
\end{figure}\label{fg04}

\subsection{Varying $\alpha_5$ with fixed $k=2$ and $p=1$}
Using the iterative algorithm, we obtain a bifurcation diagram of the system (\ref{eq4444}) when we vary $\alpha_5$ and fix $k=2$ and $p=1$, as shown in Figure \ref{fg04}(a). In this Figure \ref{fg04}(a), we can always find two positive Lyapnov exponents corresponding to any $\alpha_5$, which is well validated by Figures \ref{fg04}(b-d). Thus, we can say that system (\ref{eq4444}) is hyperchaotic with $\alpha_5\in[0.232,\ 0.328]$. We can fix $\alpha_5=0.24$ and pick up a set of Lyapunov exponents $(\lambda_1,\ \lambda_2,\ \lambda_3,\ \lambda_4,\ \lambda_5)$=$(0.2120,\ 0.1308,\ -0.0491,\ -0.2048$, $-0.5201)$. Obviously there are two positive Lyapunov exponents $\lambda_1,\ \lambda_2$ and three negative Lyapunov exponents $\lambda_3,\ \lambda_4,\ \lambda_5$ when $\alpha_5=0.24$, i.e., there is a hyperchaotic attractor, as shown in Figures \ref{fg04}(e)-(f).

\subsection{Varying $p$ with fixed $k=2$ and $\alpha_5=0.3$}
Here, we employ iterative algorithm to produce a bifurcation diagram of system (\ref{eq4444}) when we vary $p$ and fix $k=2$ and $\alpha_5 =0.3$, as shown in Figure \ref{fg05}(a). In the Figure, we can always find two positive Lyapnov exponents corresponding to any $p$, which is well confirmed by Figures \ref{fg05}(b)-(d). Thus, we can say that system (\ref{eq4444}) is hyperchaotic with $p\in[1,\ 2]$. We can set $p=1$ and obtain a set of Lyapunov exponents $(\lambda_1,\ \lambda_2,\ \lambda_3,\ \lambda_4,\ \lambda_5)$=$(0.2182,\ 0.1468,\ -0.0653,\ -0.4007$, $-1.9060)$. There are aslo two positive Lyapunov exponents $\lambda_1,\ \lambda_2$ and three negative Lyapunov exponents $\lambda_3,\ \lambda_4,\ \lambda_5$ when $p=1$, i.e. there exists hyperchaos in system (\ref{eq4444}), as shown in Figures \ref{fg05}(e)-(f).
\begin{figure}
\begin{center}
\begin{minipage}{140mm}
\subfigure[Lyapunov exponents vs. $p$.]{
\resizebox{7cm}{!}{\includegraphics{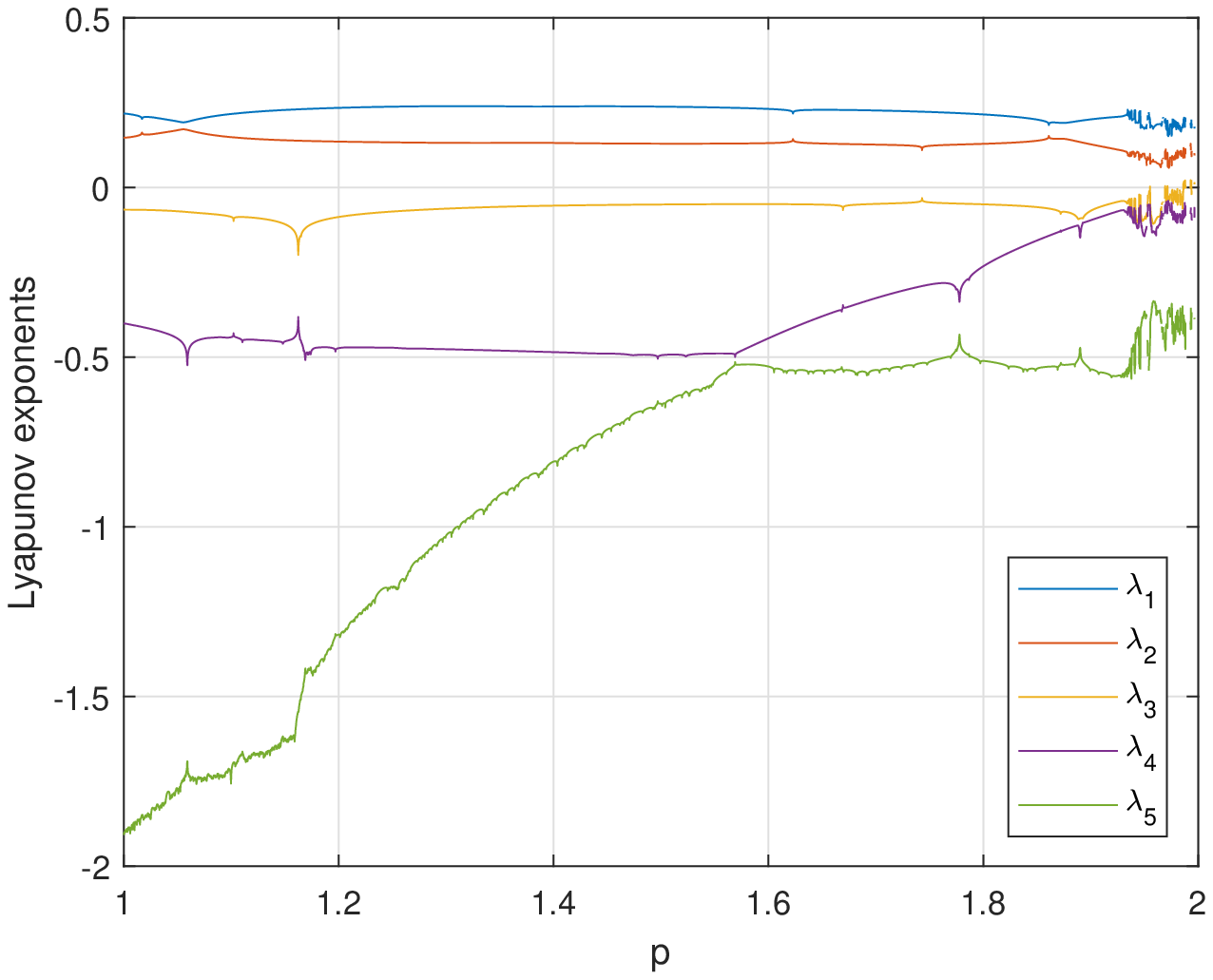}}}%
\subfigure[Bifurcation of u vs. $p$.]{
\resizebox{7cm}{!}{\includegraphics{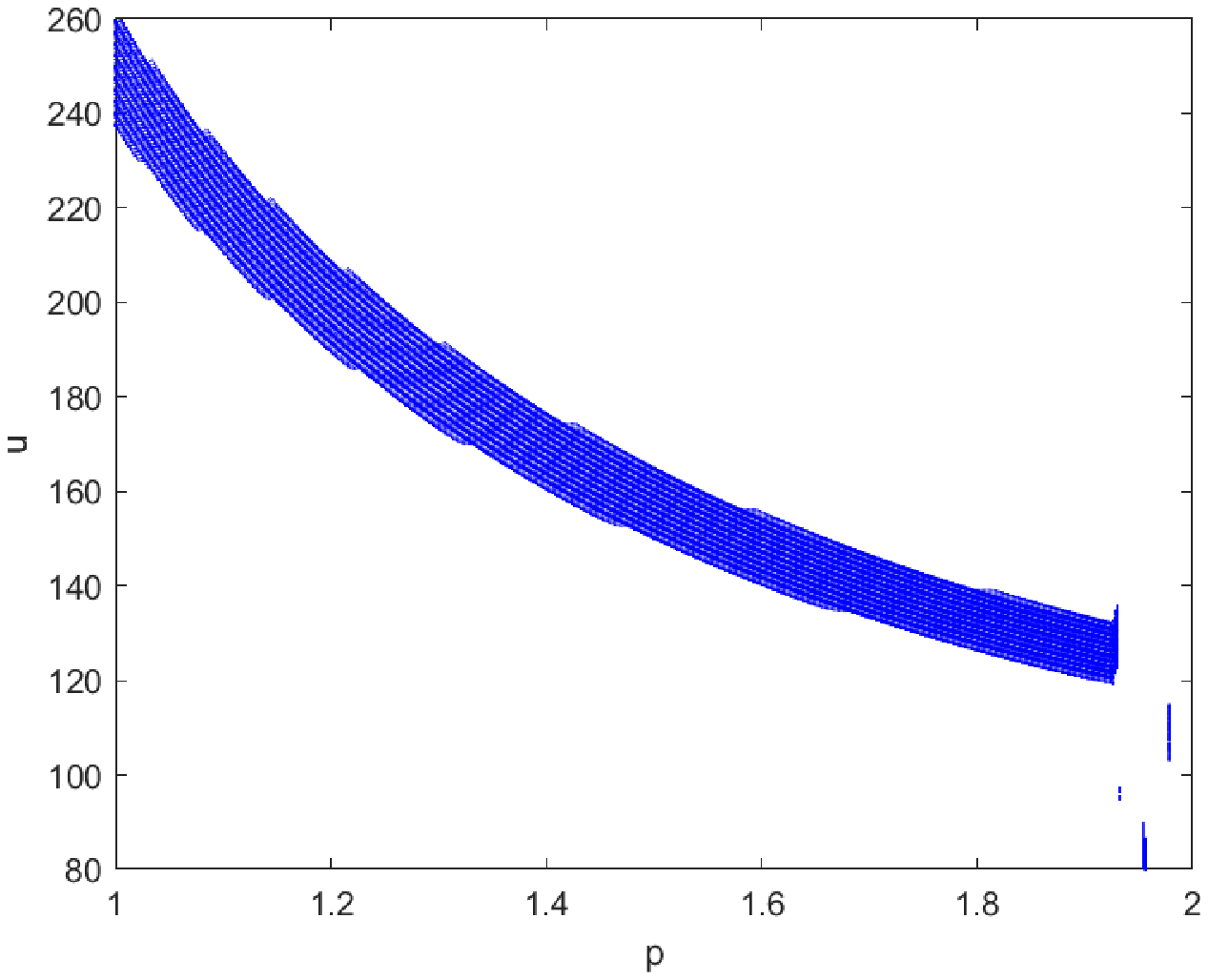}}}
\end{minipage}
\begin{minipage}{140mm}
\subfigure[Bifurcation of x vs. $p$.]{
\resizebox{7cm}{!}{\includegraphics{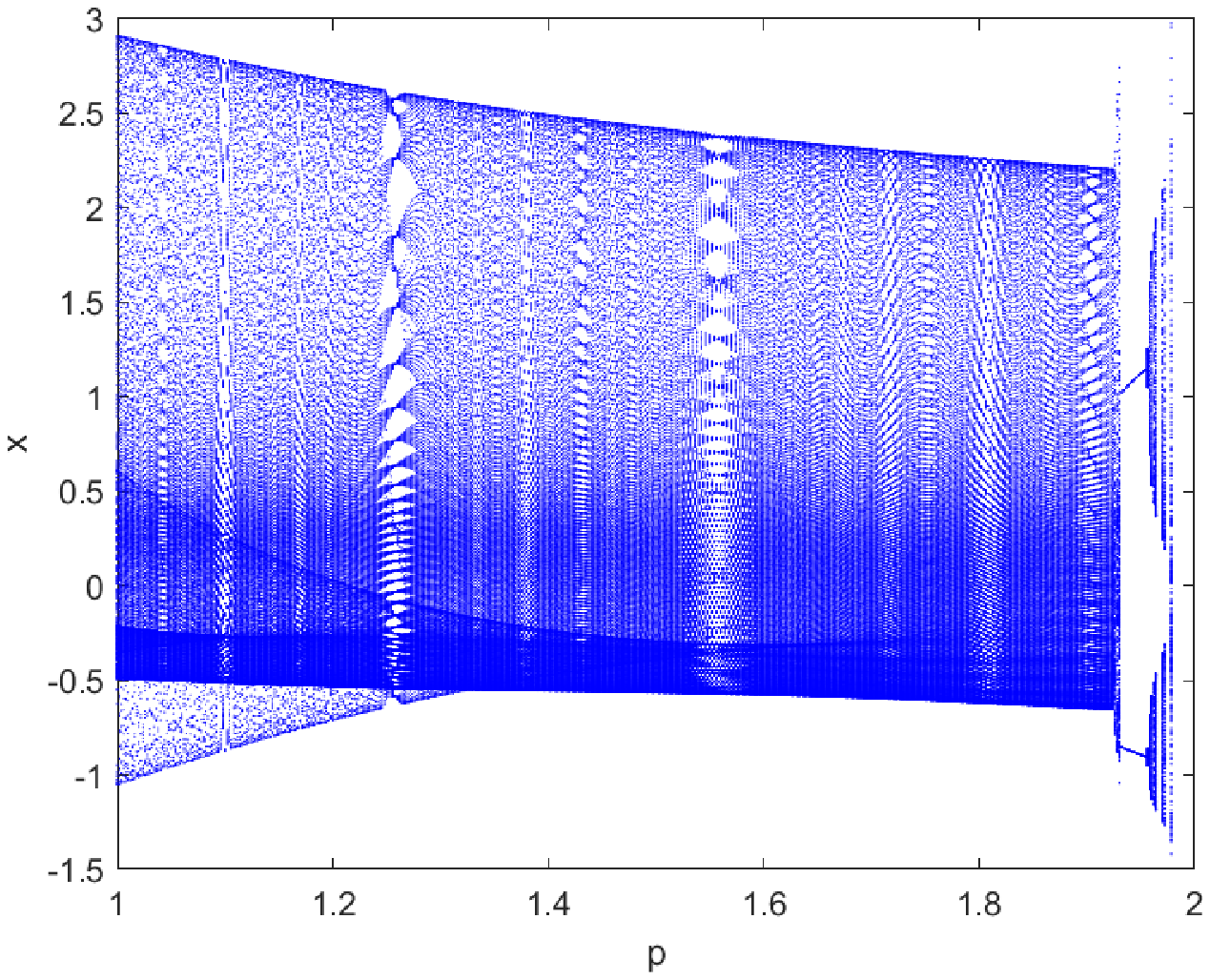}}}%
\subfigure[Bifurcation of z vs. $p$.]{
\resizebox{7cm}{!}{\includegraphics{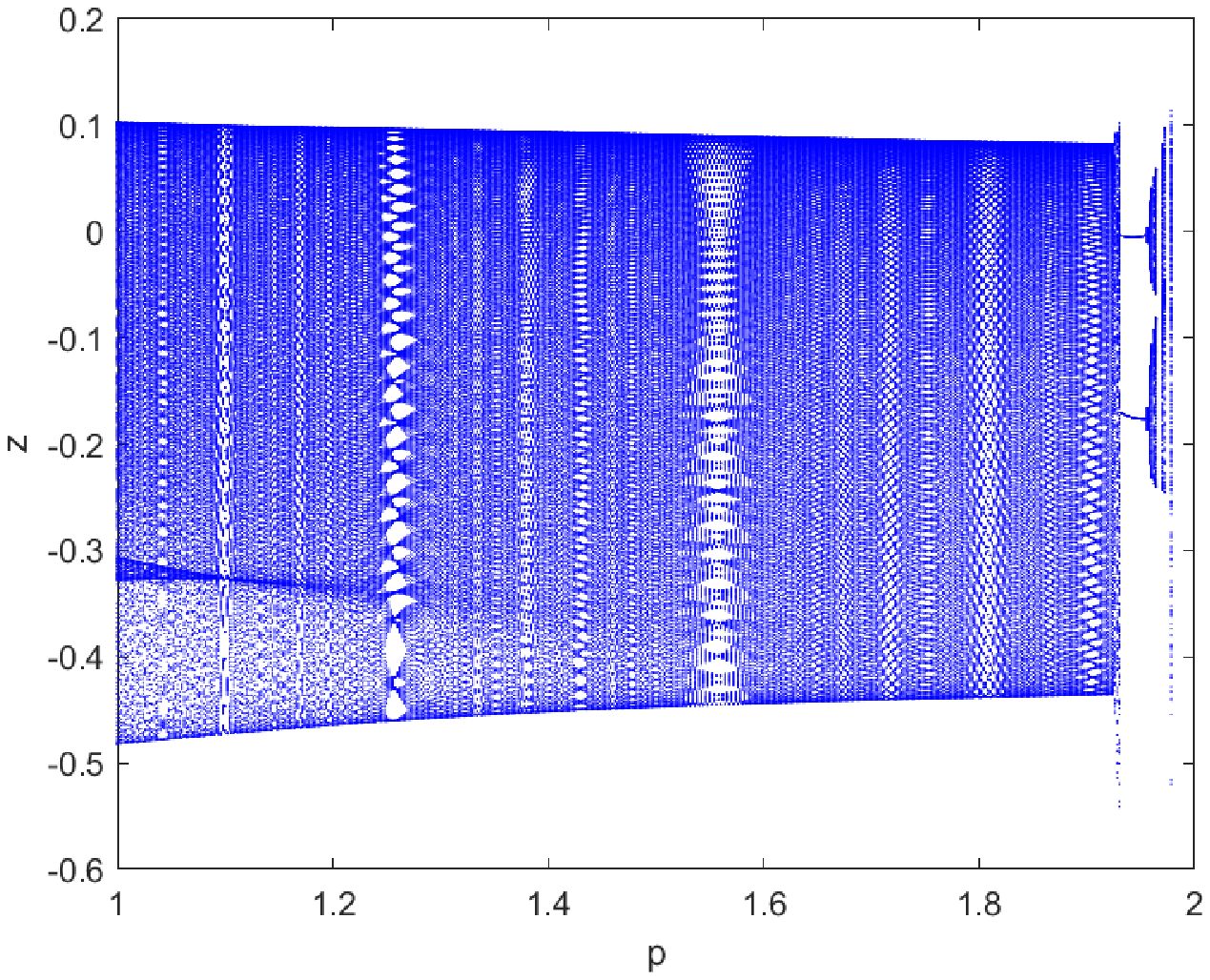}}}%
\end{minipage}
\begin{minipage}{140mm}
\subfigure[Hyperchaotic attractor in the (y, z, u)-plane vs. $p=1$.]{
\resizebox{7cm}{!}{\includegraphics{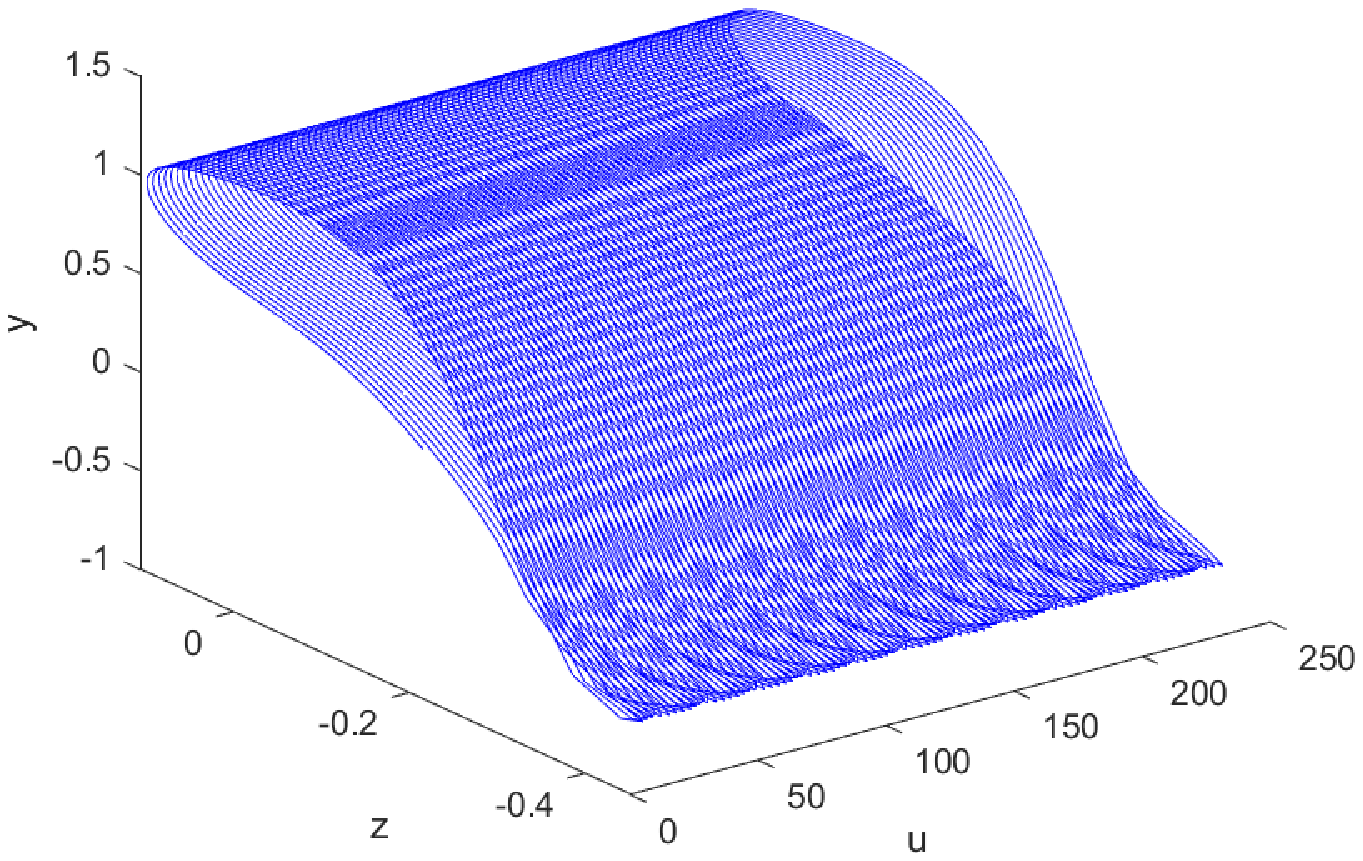}}}%
\subfigure[Hyperchaotic attractorin the (x, y, w)-plane vs. $p=1$.]{
\resizebox{7cm}{!}{\includegraphics{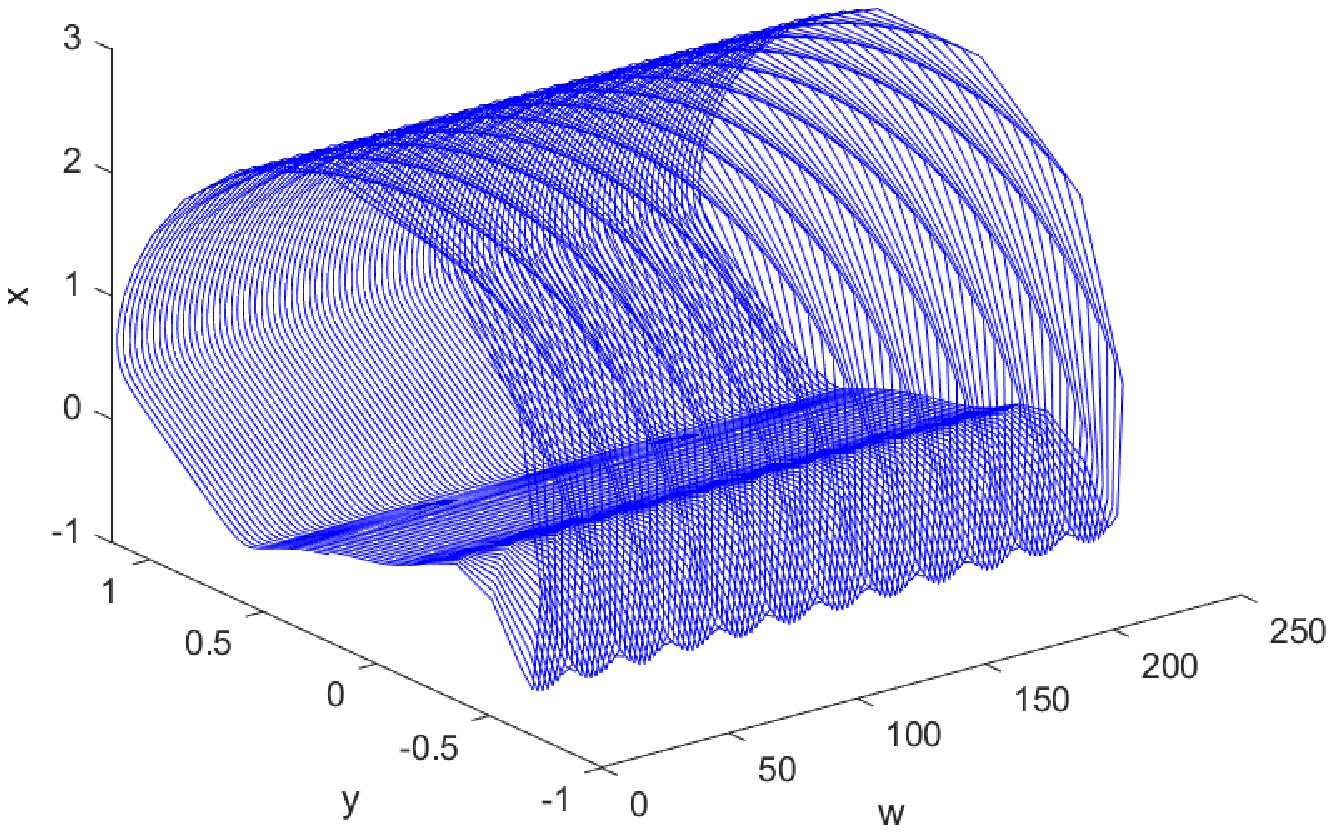}}}%
\caption{Complex behavior of system (\ref{eq4444}) with $p$.}
\label{fg05}
\end{minipage}
\end{center}
\end{figure}
\subsection{Varying $k$ with fixed $p=1$ and $\alpha_5=0.3$}
By means of the iterative algorithm, we draw a bifurcation diagram of system (\ref{eq4444}) when we vary $k$ and fix $p=1$ and $\alpha_5 =0.3$, as shown in Figure \ref{fg06}(a). In the Figure \ref{fg06}(a), we can always obtain two positive Lyapnov exponents corresponding to any $k$, which is well verified by Figures \ref{fg06}(b)-(d). Thus, we can say that system (\ref{eq4444}) is hyperchaotic with $k\in[1.5,\ 2.5]$. We also can let $k=1.5$ and obtain a set of Lyapunov exponents $(\lambda_1,\ \lambda_2,\ \lambda_3,\ \lambda_4,\ \lambda_5)$=$(0.1918,\ 0.1049,\ -0.0902,\ -0.3301$, $-1.5310)$. There are aslo two positive Lyapunov exponents $\lambda_1,\ \lambda_2$ and three negative Lyapunov exponents $\lambda_3,\ \lambda_4,\ \lambda_5$ when $k=1.5$, i.e. a hyperchaotic attractor occurs in system (\ref{eq4444}), as shown in Figures \ref{fg06}(e)-(f).

\begin{figure}
\begin{center}
\begin{minipage}{140mm}
\subfigure[Lyapunov exponents vs. $k$.]{
\resizebox{7cm}{!}{\includegraphics{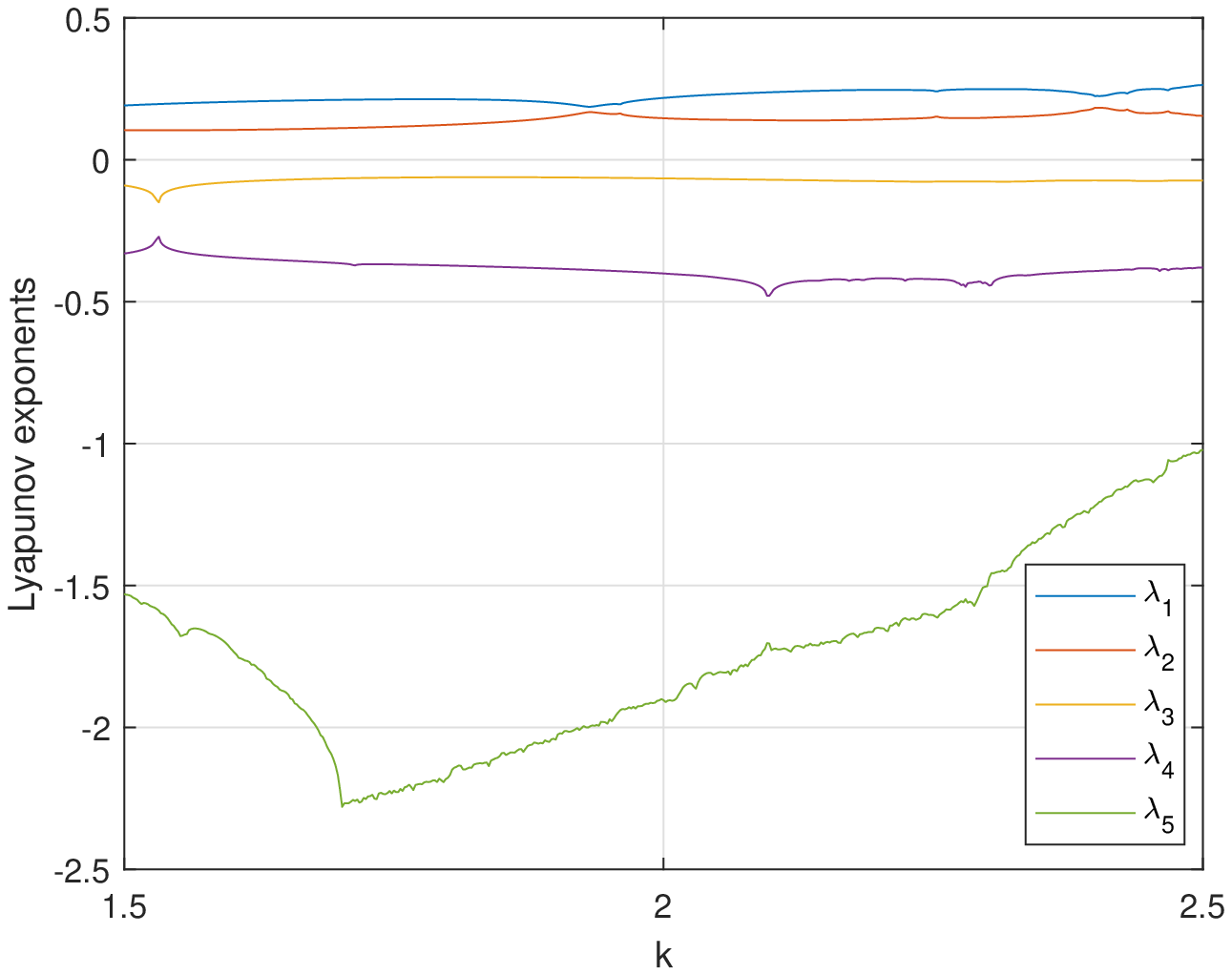}}}%
\subfigure[Bifurcation of u vs. $k$.]{
\resizebox{7cm}{!}{\includegraphics{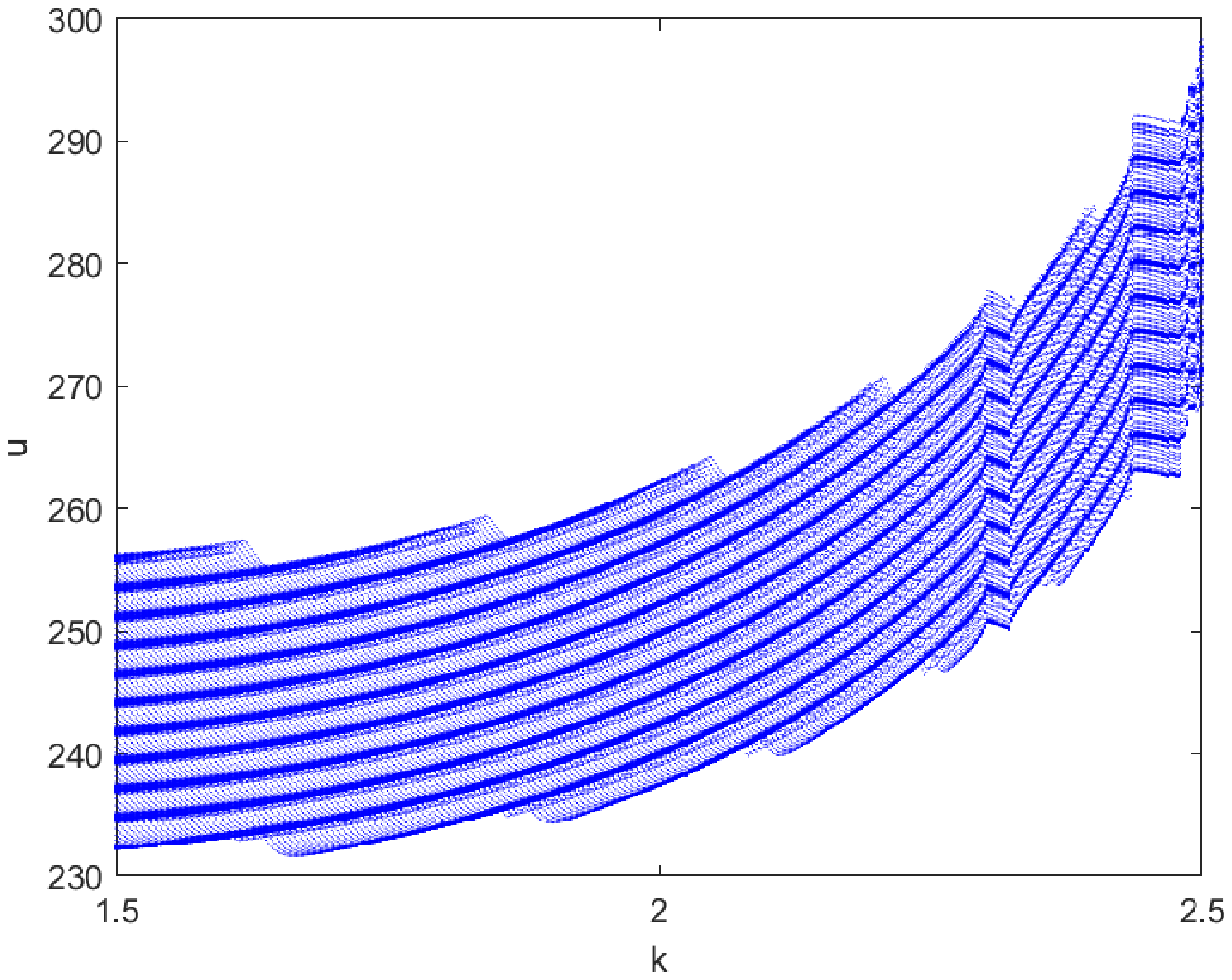}}}
\end{minipage}
\begin{minipage}{140mm}
\subfigure[Bifurcation of x vs. $k$.]{
\resizebox{7cm}{!}{\includegraphics{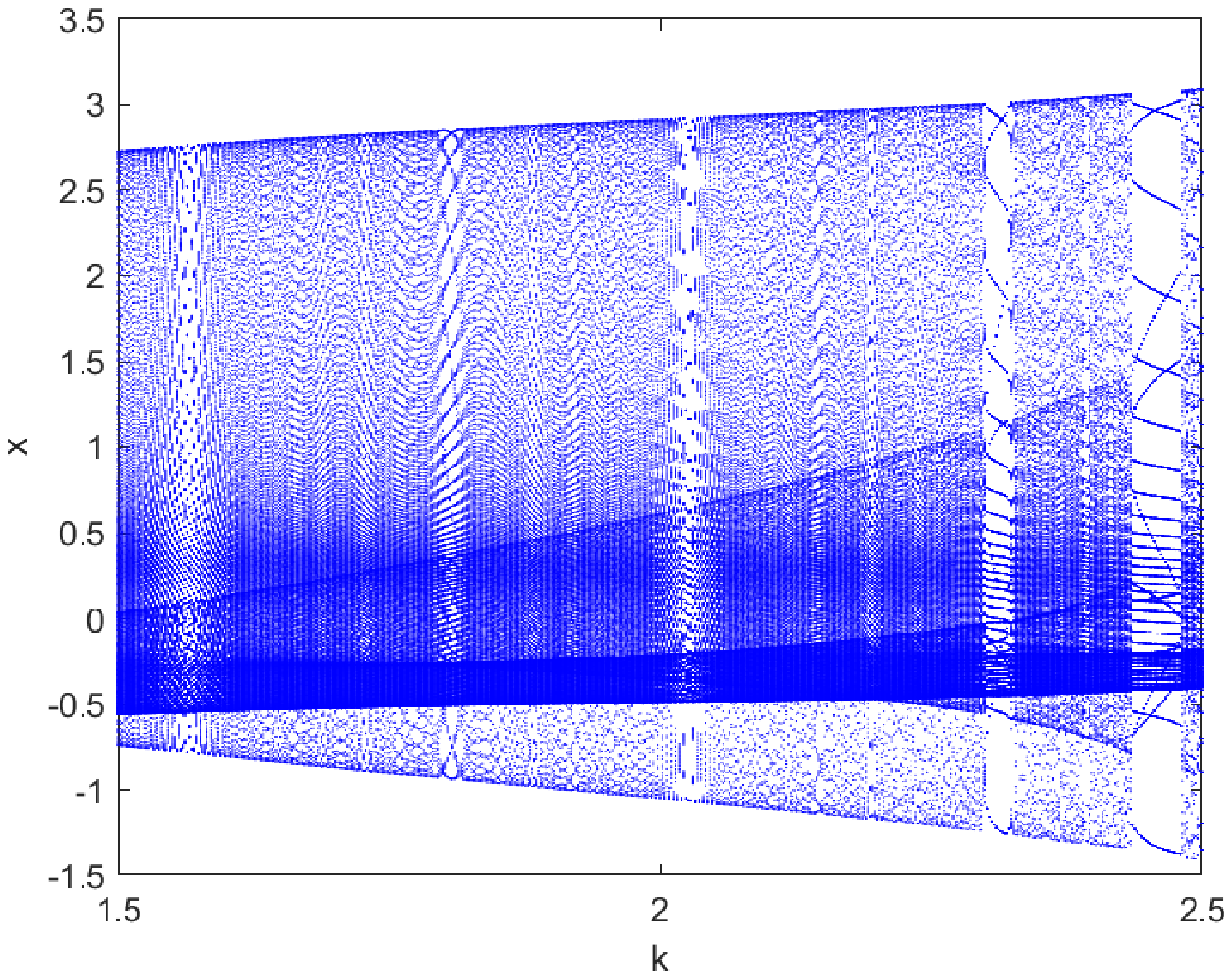}}}%
\subfigure[Bifurcation of z vs. $k$.]{
\resizebox{7cm}{!}{\includegraphics{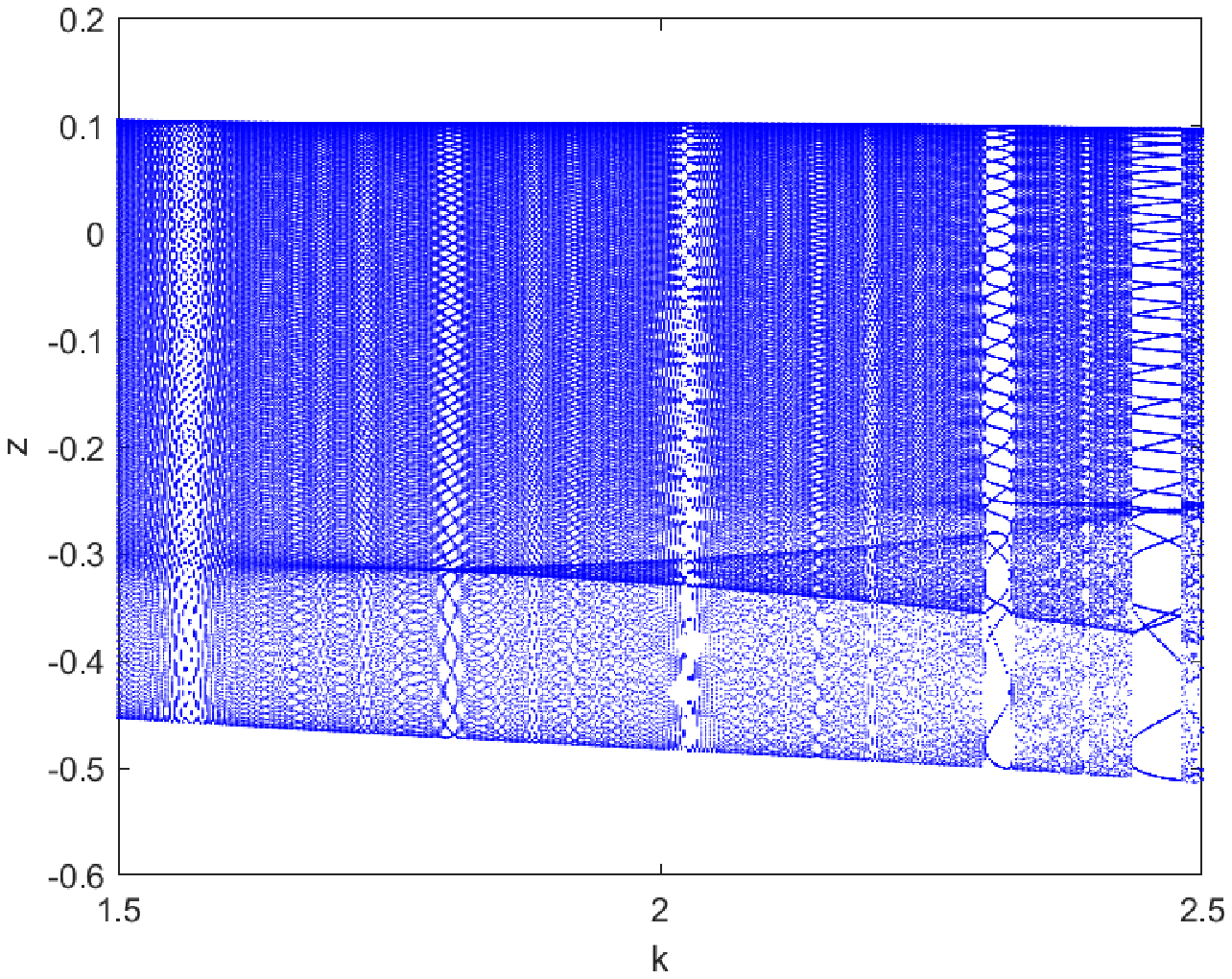}}}%
\end{minipage}
\begin{minipage}{140mm}
\subfigure[Hyperchaotic attractor in the (y, z, u)-plane vs. $k=1.5$.]{
\resizebox{7cm}{!}{\includegraphics{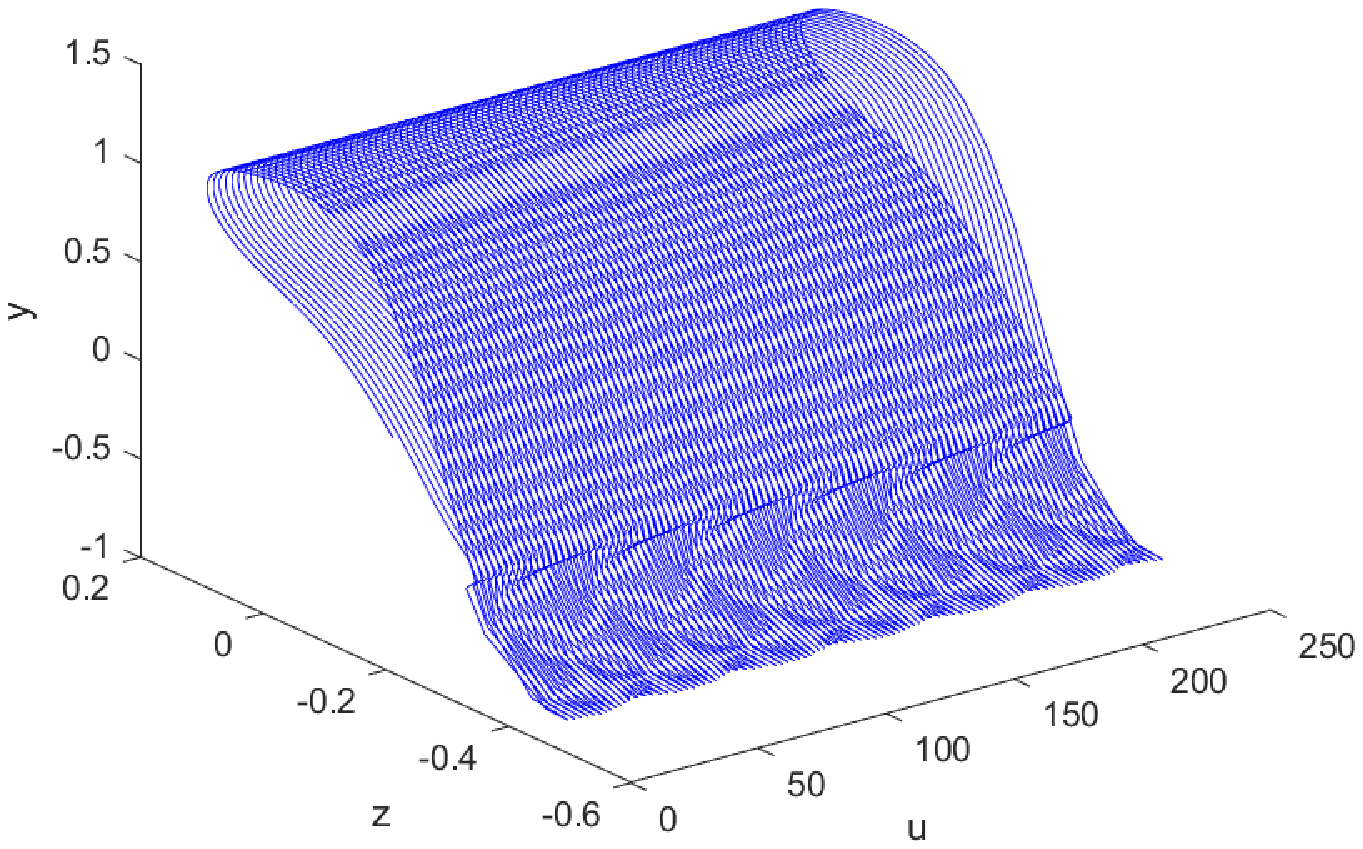}}}%
\subfigure[Hyperchaotic attractor in the (x, y, w)-plane vs. $k=1.5$.]{
\resizebox{7cm}{!}{\includegraphics{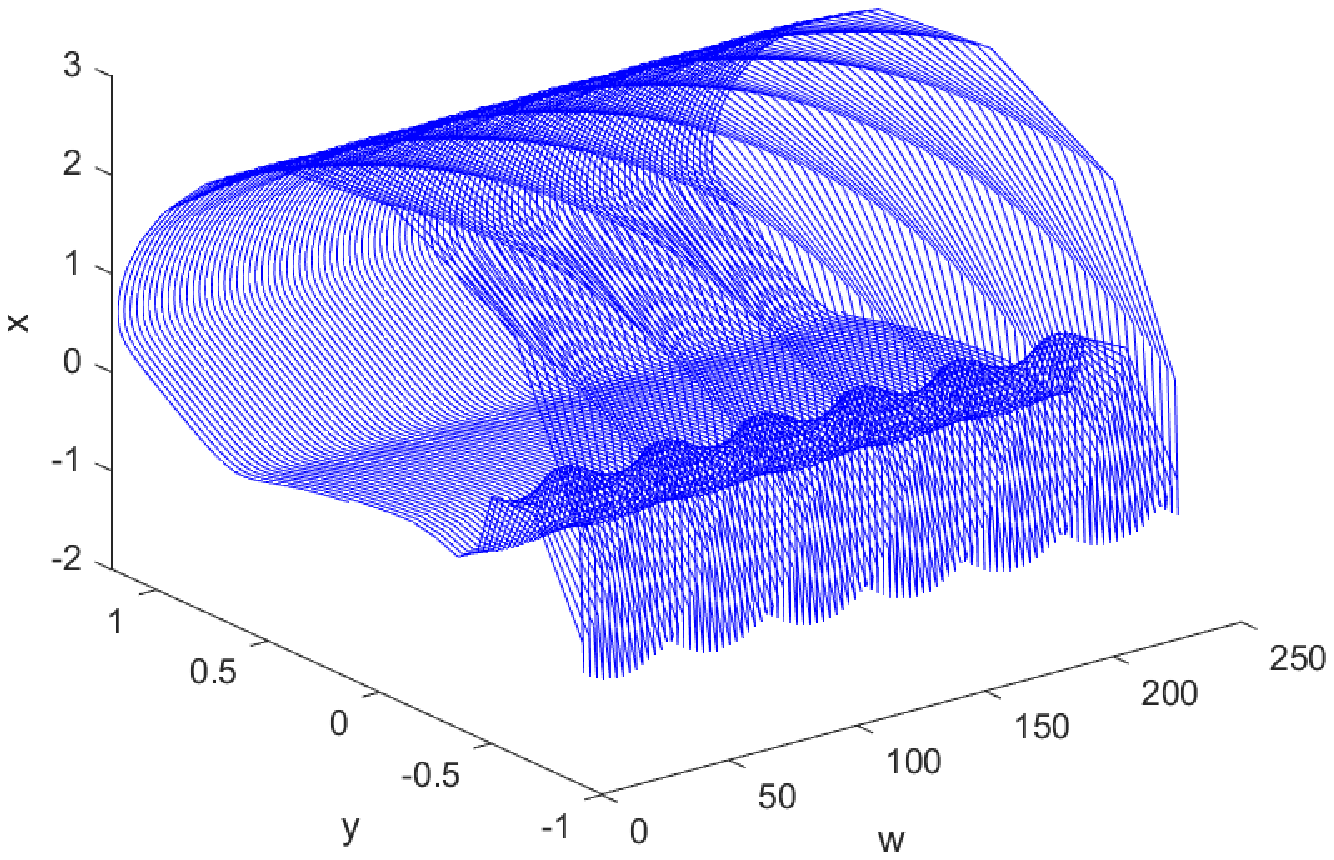}}}%
\caption{Complex behavior of system (\ref{eq4444}) with $k$.}
\label{fg06}
\end{minipage}
\end{center}
\end{figure}

\section{Conclusion}
As we know, a real financial system is very complicated, and it is impossible to accurately depict it. We try to introduce the ethics risk and conformable derivatives to an existing financial system with market confidence to set up a five-dimensional conformable derivative hyperchaotic financial system. To obtain the system's numerical solutions, we employ a piecewise constant approximation to design a new discretization process whose result is the same to that of conformable Euler's method. Finally, we use the proposed method to detect the hyperchaos in the proposed financial system.

The following extensions are of interest for future research.

(\rmnum{1})  Compared with traditional fractional-order derivatives, such as Riemann-Liouville and Caputo, the conformable derivative retains some most critical properties. A useful one is a relation between the conformable derivative and the classical integer ordered derivative in the definition of the composite function. So, we can regard the conformable derivative financial system as a natural extension of its ODE form. But the conformable derivative is a local derivative and does not have some remarkable properties, such as memory and nonlocality. Further development is to consider the memory for which a traditional fractional-order financial system with market confidence and ethics risk will be interesting.

(\rmnum{2}) In the real world, our government makes some decisions based on particular parameters of a real system. So it is essential for us to study the parameter estimation\cite{Yuanliguo2012,Behinfaraz2016,Belkhatir2018} of the conformable derivative financial system with market confidence and ethics.

(\rmnum{3}) Though we draw some dynamical system model and implement some evolutionary analysis by conformable derivative, it will be more interesting to verify the proposed model by experimental economics approach\cite{Pikulina2017,Deaves2018}.

(\rmnum{4}) The proposed discretization process for conformable derivative systems is easy to be implemented and can be used in many scientific fields, such as engineering, physics, economics, environment, ecology, and materials science.

\section*{Acknowledgments}
\small The authors would like to express many thanks to referees for their time and efforts in reviewing this manuscript. Their helpful comments and constructive suggestions greatly improved this manuscript.

\section*{Funding}
\small This work is supported partly by Natural Science Foundation of Shandong Province (Grant No. ZR2016FM26),  National Social Science Foundation of China (Grant No.16FJY008), and National Natural Science Foundation of China (Grant No.11801060).

\section*{Availability of data and materials}
\small Not applicable.

\section*{Competing interests}
\small The authors declare that they have no competing interests.

\section*{Authors’ contributions}
\small The authors have made the same contribution. All authors read and approved the final manuscript.

\end{document}